\documentclass[11pt]{article}
\usepackage[a4paper,margin=1in]{geometry}
\usepackage[T1]{fontenc}
\usepackage[utf8]{inputenc}
\usepackage{lmodern}
\usepackage[expansion=false]{microtype}
\usepackage{amsmath,amssymb,amsfonts,amsthm,mathtools}
\usepackage{booktabs}
\usepackage{array}
\usepackage{enumitem}
\usepackage{graphicx}
\usepackage{xcolor}
\usepackage{hyperref}
\usepackage{url}
\usepackage{csquotes}
\usepackage[capitalise,nameinlink]{cleveref}
\crefname{assumption}{Assumption}{Assumptions}
\Crefname{assumption}{Assumption}{Assumptions}
\crefname{claim}{Claim}{Claims}
\Crefname{claim}{Claim}{Claims}
\usepackage{tikz}
\usepackage{pgfplots}
\pgfplotsset{compat=1.18}
\usepackage[backend=biber,style=authoryear,maxcitenames=2,maxbibnames=99,giveninits=true,uniquename=init,doi=true,url=true]{biblatex}
\hypersetup{
  colorlinks=true,
  linkcolor=blue!55!black,
  citecolor=blue!55!black,
  urlcolor=blue!55!black,
  pdftitle={Financial Epiplexity: A Theory of Learnable Market Structure under Bounded Computation},
  pdfauthor={Miquel Noguer i Alonso}
}

\theoremstyle{definition}
\newtheorem{definition}{Definition}[section]
\newtheorem{assumption}{Assumption}[section]
\newtheorem{example}{Example}[section]

\theoremstyle{plain}
\newtheorem{proposition}{Proposition}[section]
\newtheorem{theorem}{Theorem}[section]

\newtheorem{corollary}{Corollary}[section]
\newtheorem{claim}{Claim}[section]

\theoremstyle{remark}
\newtheorem{remark}{Remark}[section]
\newtheorem{conjecture}{Conjecture}[section]

\DeclareMathOperator{\E}{\mathbb{E}}
\DeclareMathOperator{\Var}{Var}
\DeclareMathOperator{\Cov}{Cov}

\DeclareMathOperator{\MDL}{MDL}

\newcommand{\F}{\mathcal{F}}
\newcommand{\M}{\mathcal{M}}

\newcommand{\X}{\mathcal{X}}

\newcommand{\Prob}{\mathbb{P}}
\newcommand{\FEpi}{\operatorname{FEpi}}
\newcommand{\TBE}{\operatorname{TBE}}
\newcommand{\FMDL}{\operatorname{FMDL}}

\newcommand{\OO}{\mathcal{O}}

\newcommand{\eps}{\varepsilon}
\newcommand{\Gain}{G}
\newcommand{\MDLzero}{\operatorname{MDL}_0}
\newcommand{\SR}{\operatorname{SR}}
\newcommand{\IC}{\operatorname{IC}}
\newcommand{\IR}{\operatorname{IR}}

\newcommand{\TV}{\operatorname{TV}}

\title{\vspace{-1.2cm}\textbf{Financial Epiplexity:\\ A Theory of Learnable Market Structure under Bounded Computation}}
\author{Miquel Noguer i Alonso\\ \\Artificial Intelligence Finance Institute}
\date{\today}

\begin{document}
\maketitle

\begin{abstract}
\noindent
Financial markets are hard to predict, not because every price move is ontically random, but because structure is strategic, capacity constrained, and computationally difficult. Classical financial information theory measures uncertainty, distributional change, dependence, and directed information flow through entropy, KL divergence, NMI, and transfer entropy. This paper keeps that foundation and asks a theoretical question: how much detected structure can a bounded investor learn and reuse? We develop \emph{financial epiplexity}: a time-bounded MDL measure of learnable market structure relative to a filtration, representation, target, model class, and budget. The alpha-relevant object is not raw complexity, but target-specific net compression gain beyond a benchmark after charging the model and representation for their description length. This framing is consistent with no-arbitrage and the \(P\)--\(Q\) wedge: risk-neutral martingality closes arbitrage under a pricing measure, while epiplexity concerns real-world learnability under an information set. We prove separation results showing equal entropy need not imply equal epiplexity, derive finite-sample thresholds for useful regimes, formalize representation dependence and nonmonotonicity in compute, and state when additional variables become structurally valuable. Using the Kelly--Cover--Barron--Cover code-length/wealth correspondence, we bound cumulative excess log-growth, Sharpe ratios, sustainable information coefficients, and Grinold-style breadth by structural bits per period, with leverage and survival treated as growth/capacity constraints rather than Sharpe-ratio deflators. The dynamic theory models alpha decay as migration of private bits into market budget, crowding as mutual compressibility, and capacity as bit leakage through trading. The strategic layer studies budget choice, signal congestion, competitive revelation, endogenous computational depth, heterogeneity, and Red Queen compute competition.
\end{abstract}

\noindent\textbf{Keywords:} epiplexity; financial theory; information theory; minimum description length; market efficiency; alpha; computational depth; crowding; game theory; bounded rationality.

\newpage
\tableofcontents
\newpage

\section{Introduction}

A financial market is not ontically random in the sense of a physical chance device; it is hard to predict because causal structure is filtered through information, orders, balance sheets, constraints, institutions, and strategic feedback. It is also a computational object observed through prices, volumes, order books, macroeconomic releases, central-bank language, credit spreads, volatility surfaces, news, regulations, flows, and portfolio constraints. Yet most quantitative measures of information used in finance treat the data primarily as uncertainty, dependence, or entropy. Volatility measures dispersion. Entropy measures average surprise. Mutual information measures statistical dependence. Predictive accuracy measures performance for a particular target under a particular model. None of these quantities directly asks the question that matters to a bounded financial learner:
\begin{quote}
\emph{How much useful market structure can this data teach a finite model under finite computation?}
\end{quote}
This question is central to modern finance because most financial learning systems are bounded in at least four ways: they have finite samples, finite compute, finite memory, and finite time before the regime changes. In markets, the observer is never the omniscient statistician of classical asymptotics. It is a fund, bank, agent, risk desk, or algorithmic system operating under latency, capital, compliance, and transaction-cost constraints.

This positioning is consistent with the companion argument that markets are not literally random but hard to predict: no-arbitrage, informational efficiency, learnability, and net exploitability are distinct notions; the risk-neutral measure is an instrumental pricing measure, not the physical data-generating law; and positive prediction is economically relevant only when it survives costs, capacity, and survival constraints \parencite{noguer2026hardtopredict}. The present paper adds a bounded-computation layer to that thesis. It asks not whether a pattern exists in principle, but how many target-specific structural bits a feasible learner can extract from a stated information set.

The concept of \emph{epiplexity}, introduced by \textcite{finzi2026epiplexity}, provides a language for this problem. Epiplexity is designed to capture structural information available to a computationally bounded observer, separating learnable structure from residual time-bounded entropy. The original motivation comes from tensions between classical information theory and modern AI practice: deterministic synthetic data can teach models useful behavior; the order of data can matter; likelihood modeling can produce models that appear to learn richer structures than those explicitly present in the data-generating procedure. In the epiplexity framework, these phenomena are not paradoxes. They are consequences of bounded computation and representation-dependent access to structure.

This paper argues that finance is one of the natural domains for epiplexity. Financial markets contain persistent but unstable regularities: volatility clustering, leverage effects, factor structure, liquidity spirals, credit cycles, calendar effects, behavioral feedback, macro transmission, regime switches, and crisis dynamics. They also contain enormous residual randomness. The usual statement that ``markets are noisy'' is correct but incomplete. A more precise statement is:
\begin{quote}
\emph{Financial data contain learnable structure, but the visible amount of structure depends on representation, computational budget, horizon, task, and market regime.}
\end{quote}
Financial epiplexity is proposed as a formal measure of that visible, learnable structure.

\subsection{The central distinction}

The core distinction is:
\[
\text{financial entropy} \neq \text{financial epiplexity}.
\]
A purely random return sequence can have high entropy but low epiplexity because it teaches little reusable structure. A trivial calendar rule can be easy to learn but low in epiplexity because it carries little structural richness. A multimodal dataset combining prices, volatility, liquidity, rates, credit, macro surprises, options, text, and forward outcomes may be harder to learn, yet higher in epiplexity because it encodes reusable market mechanisms.

In one line:
\[
\boxed{\text{Financial epiplexity is bounded-compute learnable market structure.}}
\]
This definition is intentionally relative. It is not a universal scalar attached to a market once and for all. It depends on an observer class, a compute budget, a representation of data, a predictive task, and a horizon. This relativity is a strength, not a weakness. Markets are precisely systems in which structure appears differently to different observers.

\subsection{Contributions}

This theory paper makes eight contributions.

\begin{enumerate}[label=(\roman*)]
\item It defines \emph{financial epiplexity} through a time-bounded MDL decomposition of represented financial data into learned model bits and residual predictive bits.
\item It separates financial epiplexity from entropy, volatility, mutual information, in-sample fit, and ordinary learnability.
\item It proves elementary separation results: equal entropy need not imply equal financial epiplexity; representation and temporal ordering can change accessible structure; epiplexity need not be monotone in compute; and memorization is not useful structure.
\item It introduces the alpha-relevant object: net, target-specific MDL gain, especially return-targeted conditional epiplexity $\mathcal A_B(Y;Z\mid X)$.
\item It proves a monetization theorem: no bounded strategy can extract more lifetime excess log-growth than a market-structure constant times the target-specific compression gain of its data and representation.
\item It derives a Sharpe ceiling, a sustainable information-coefficient ceiling, a computational-depth law for alpha decay, a mutual-compressibility view of crowding, and a bit-leak interpretation of capacity.
\item It formalizes the importance of one-way and inferential orderings in market data, especially when downstream transfer requires latent mechanism recovery.
\item It adds the strategic layer: a budget game, a congestion game, a Kyle-style competitive-revelation game, an endogenous depth game, and a Red Queen arms-race game for aggregate computation.
\end{enumerate}

\subsection{A guiding example}

Consider three datasets for predicting a one-month equity drawdown:
\begin{align*}
D_1 &= \{\text{daily returns only}\},\\
D_2 &= \{\text{returns, realized volatility, volume, sector factors}\},\\
D_3 &= \{\text{returns, volatility, rates, credit spreads, option skew,}\nonumber\\
&\qquad \text{macro surprises, news embeddings}\}.
\end{align*}
The first dataset may be noisy and high entropy. The second may expose volatility clustering and factor structure. The third may encode macro-financial mechanisms: central-bank surprises, credit tightening, volatility demand, liquidity contraction, and equity factor rotation. If a bounded learner trained on $D_3$ learns reusable subprograms that transfer to crisis periods, then $D_3$ has higher financial epiplexity than $D_1$, even if its training loss is initially higher.

This example also shows why financial epiplexity is not simply predictability. A dataset can be predictable because it is trivial, or because it contains deep structure. Epiplexity targets the second case.

\section{Classical Information and the Financial Learning Problem}

\subsection{Shannon information}

For a random variable $X$ with probability mass function $p$, the Shannon information of outcome $x$ is
\[
I(x) = -\log_2 p(x),
\]
and the Shannon entropy is
\[
H(X)=\E[-\log_2 p(X)].
\]
Entropy measures average surprise, not usefulness. A market path with independent random signs can have maximal entropy and yet teach no stable trading rule. Conversely, a regime process with the same one-step marginal entropy can be highly useful if its temporal dependence is learnable.

Mutual information,
\[
I(X;Y)=H(Y)-H(Y\mid X),
\]
measures statistical dependence between variables. It is valuable, but it is still not sufficient for the problem of financial learning. It does not specify whether the dependence is accessible to a bounded learner, whether it is stable across regimes, whether it survives transaction costs, or whether it can be represented in a model class under a compute budget.

\begin{remark}[Consistency with classical financial information theory]
This paper is intended as a theoretical continuation of, not a replacement for, classical financial information-theoretic diagnostics. In the notation of \textcite{noguer2025financialinformation}, entropy measures return uncertainty, KL divergence measures distributional regime change, normalized mutual information provides a bounded diagnostic of temporal dependence and market efficiency, and transfer entropy measures directional information flow. Financial epiplexity takes these objects as first-layer diagnostics and asks a second-layer question: conditional on a task, representation, and budget, which part of the detected uncertainty reduction or dependence can be compressed into a reusable bounded model? Thus entropy, KL, NMI, and transfer entropy diagnose uncertainty, change, dependence, and direction; financial epiplexity asks whether those diagnostics become target-relevant predictive code. Throughout the MDL formulas below, logarithms are base two and code lengths are measured in bits. When a wealth or utility expression is written in natural-log units, the factor $\ln 2$ converts bits to nats.
\end{remark}

\subsection{Algorithmic information}

Kolmogorov complexity $K(x)$ is the length of the shortest program that outputs $x$ and halts \parencite{kolmogorov1965,li2008kolmogorov}. It applies to individual strings rather than only random variables. However, it is incomputable and assumes unbounded search over programs. A financial string may have a short generative program that is practically impossible to discover before the regime disappears. For finance, the relevant question is not only whether a short program exists, but whether a bounded learner can find useful structure in time. The raw model-bit notion is closest to budget-limited sophistication or structure-function thinking in algorithmic information theory \parencite{gacs2001,vereshchagin2004}; the term computational depth also has a prior literature, from Bennett's logical depth to later complexity-theoretic versions \parencite{bennett1988,antunes2006}. Here depth means a market-specific budget gap in extractable predictive structure, not identical to either classical notion.

\subsection{Minimum description length}

The minimum description length principle selects models that compress data well by trading off model complexity and residual error \parencite{rissanen1978,barron1998,grunwald2007}. In its simplest two-part form,
\[
\MDL(D;M)=L(M)+L(D\mid M),
\]
where $L(M)$ is the number of bits required to describe the model and $L(D\mid M)$ is the number of bits required to describe the data given the model.

Time-bounded description-length ideas also connect to the speed prior and to PAC-Bayes coding views of learning \parencite{schmidhuber2002,mcallester1999,catoni2007}. Epiplexity is closely related to this two-part view, but it changes the interpretation. The model bits are not merely a penalty for complexity; they are treated as the structural information that the data have taught the model, provided the model is selected under a bounded computational search.

\subsection{Why finance needs bounded observers}

The efficient market hypothesis, in its classical form, says that prices reflect available information \parencite{fama1970}. Grossman and Stiglitz famously argued that perfectly informationally efficient markets are impossible when information is costly, because if prices fully reveal information there is no incentive to acquire it \parencite{grossman1980}. Lo's adaptive markets hypothesis reframes market efficiency through adaptation, competition, and changing environments \parencite{lo2004}.

Financial epiplexity is compatible with this line of thought. It does not say that markets are inefficient in a universal sense. It says that market structure is observer-relative. The same data may be noise for one learner, factor structure for another, and regime mechanism for a third. The relevant object is therefore not ``information in the market'' in the abstract, but accessible structural information under computational and institutional constraints.

This also locates the paper relative to the anomaly and financial econometrics and financial-machine-learning literatures \parencite{campbell1997,lopezdeprado2018}. Those literatures ask whether a published predictor earns returns, whether the claim survives multiple testing, and whether the effect decays after dissemination \parencite{harvey2016,mclean2016,bailey2014,lopezdeprado2018}. Financial epiplexity asks a prior theoretical question: before monetization and before multiple testing, how many target-specific structural bits are accessible to a bounded learner from a represented data source? The answer need not equal realized alpha. It is the information budget from which any alpha, risk improvement, or capacity claim must be paid.

A related tradition models bounded observers through information-capacity constraints: rational inattention posits agents who optimally allocate a limited flow of mutual information between states and actions \parencite{sims2003}. Financial epiplexity is complementary but distinct. In rational inattention the constraint is attentional: the agent chooses \emph{which} bits to observe, and the bits themselves are classical Shannon bits, available at a price. Here the constraint is computational: the bits are defined only relative to a model class and budget, and structure can be present in fully observed data yet inaccessible because extracting it requires computation the observer does not have. An inattentive agent with unbounded computation and a full-capacity channel recovers the classical benchmark; a bounded learner does not. In the same spirit, computational reformulations of market efficiency tie the presence of exploitable patterns to computational hardness \parencite{maymin2011,hasanhodzic2011}. Epiplexity gives that intuition a quantitative unit: bits of structure per budget.

\section{Epiplexity: The Conceptual Source}

\textcite{finzi2026epiplexity} define epiplexity as a measure of structural information available to a computationally bounded observer. The original framework separates structural content from residual time-bounded entropy and is motivated by three tensions:
\begin{enumerate}[label=(\alph*)]
\item deterministic transformations appear not to increase classical information, yet synthetic data and self-play can teach useful behavior;
\item classical information is invariant to factorization order, yet learners depend strongly on ordering and representation;
\item likelihood modeling is often described as distribution matching, yet trained models can acquire reusable subprograms and emergent capabilities.
\end{enumerate}

The finance translation is direct. A deterministic stress simulator can teach a risk model about crisis dynamics because the learner is not charged with knowing the simulator's source code; it observes trajectories and must discover a reusable description under its own budget. If the simulator code is handed to the learner as public information, the epiplexity of the generated paths conditional on that code can collapse. If only the paths are observed, the model may need to learn intermediate crisis mechanisms: margin spirals, volatility feedback, credit widening, liquidity withdrawal, and liquidation cascades. The useful uncertainty is therefore epistemic rather than aleatory. The same distinction applies to real markets: the generative mechanism may be partly deterministic, but the bounded investor faces a discovery problem. A time-ordered sequence of central-bank statements and yield-curve reactions is more useful than a shuffled bag of sentences and returns. A likelihood model trained on financial text may learn macroeconomic and institutional relationships that transfer to portfolio decisions.

\subsection{Epiplexity is not ordinary learnability}

Epiplexity should not be confused with ease of learning. Let $D$ be a dataset.
\begin{itemize}
\item If $D$ is pure white noise, it is hard to learn and low in epiplexity.
\item If $D$ is a trivial repetition, it is easy to learn and low in epiplexity.
\item If $D$ contains rich, reusable structure, it may be hard to learn but high in epiplexity.
\end{itemize}
For finance, this distinction is essential. A backtest that learns a trivial seasonal artifact need not have high epiplexity. A multimodal dataset that gradually teaches a model how liquidity, leverage, volatility, and credit interact may have high epiplexity even if its first-stage loss declines slowly.

\section{Financial Epiplexity: Formal Definition}

\subsection{Financial data as filtered processes}

Let $(\Omega,\F,(\F_t)_{t\geq 0},\Prob)$ be a filtered probability space. Let $P_t$ denote prices, $r_t=\log P_t-\log P_{t-1}$ returns, and $Z_t$ a possibly latent market state. Let $\mathcal I_t$ denote the raw information available at time $t$ to a financial observer: prices, volumes, order-book features, macro releases, news, balance-sheet data, option surfaces, flows, and internal agent logs.

A representation map
\[
R: \mathcal I_t \mapsto X_t^R \in \X
\]
turns raw market information into features. A target map defines
\[
Y_{t+h}=\tau(\mathcal I_{t+1:t+h}),
\]
where $Y_{t+h}$ may be a return, sign, drawdown, realized volatility, VaR breach, regime label, liquidity shock, or portfolio loss.

The represented financial dataset is
\[
D_{T,h}^R = \{(X_t^R,Y_{t+h})\}_{t=1}^{T-h}.
\]
The superscript $R$ is important. Financial epiplexity is not a property of raw data alone; it is a property of a representation of data for a task.

The filtration is equally important. A discounted price can be close to a martingale relative to a public filtration and still contain structure relative to an enlarged private filtration, a faster microstructure filtration, or a richer model-implied state. Hence every epiplexity statement below is implicitly indexed by an information set: it is a claim about what is learnable from \(\mathcal I_t\) as represented by \(R\), not a metaphysical claim about the asset price process itself.

\subsection{Bounded model classes}

Let the computational budget be a vector
\[
\mathbf B=(B_{\mathrm{train}},B_{\mathrm{eval}},B_{\mathrm{mem}},B_{\mathrm{search}},B_{\mathrm{lat}},B_{\mathrm{cost}}),
\]
covering training FLOPs, evaluation latency, memory, hyperparameter search, and institutional cost. A scalar symbol $B$ below denotes a fixed feasible set induced by such a vector, not a universal unit of computation. Formally,
\[
\M_B=\{M: C(M;D,R)\preceq \mathbf B\},
\]
where $\preceq$ is the componentwise budget order. The model class is fixed before inspecting the target sample; otherwise the code can hide data mining inside the model-class choice.

Each model $M\in\M_B$ defines a conditional predictive distribution
\[
p_M(y\mid x).
\]
The predictive code length in bits is
\[
\ell_M(y\mid x) = -\log_2 p_M(y\mid x).
\]
The description length charged to the learner is
\[
L_R(M)=L(R)+L(M\mid R),
\]
including the representation map, preprocessing rules, architecture, parameters, state variables, and any non-public simulator or feature-construction code. When the representation is fixed by the paper and not searched, $L(R)$ is a constant and can be suppressed. When representations are selected from a family, $L(R)$ is part of the MDL penalty; this is the formal guard against representation mining.

\begin{definition}[Time-bounded financial MDL]
For a represented financial dataset $D_{T,h}^R$, model class $\M_B$, and code length $L_R$, define
\[
\FMDL_B(D_{T,h}^R)
=
\inf_{M\in\M_B}
\left\{
L_R(M)+\sum_{t=1}^{T-h} -\log_2 p_M(Y_{t+h}\mid X_t^R)
\right\}.
\]
The infimum need not be attained and, even when it is, two-part codes can have multiple near-minimizers with different model-bit/data-bit splits. Fix a tolerance $\eps_T\ge0$ and define the $\eps_T$-optimal set
\[
\mathcal S_{B,\eps_T}(D)=
\left\{M\in\M_B:
L_R(M)+\sum_{t=1}^{T-h}\ell_M(Y_{t+h}\mid X_t^R)
\le \FMDL_B(D_{T,h}^R)+\eps_T
\right\}.
\]
Let $M_{B,\eps}^{\star}$ be an element of $\mathcal S_{B,\eps_T}(D)$ with minimal $L_R(M)$, with a fixed lexicographic tie-breaker if needed. This tie-breaking convention is part of the definition below and avoids the usual sophistication-instability ambiguity.
\end{definition}

\begin{definition}[Raw financial epiplexity]
The raw financial epiplexity of $D_{T,h}^R$ under budget $B$ and tolerance $\eps_T$ is
\[
\FEpi_B^{\mathrm{raw}}(D_{T,h}^R)=L_R(M_{B,\eps}^{\star}).
\]
The financial time-bounded entropy is
\[
\TBE_B^{\mathrm{fin}}(D_{T,h}^R)=
\sum_{t=1}^{T-h} -\log_2 p_{M_{B,\eps}^{\star}}(Y_{t+h}\mid X_t^R).
\]
Thus, up to the chosen tolerance,
\[
\FMDL_B(D_{T,h}^R)=\FEpi_B^{\mathrm{raw}}(D_{T,h}^R)+\TBE_B^{\mathrm{fin}}(D_{T,h}^R)+\OO(\eps_T).
\]
The paper's alpha bounds use the stable object $\FMDL_B$ and its associated gain; raw epiplexity is retained as an expository decomposition rather than as the monetizable quantity.
\end{definition}

The raw definition follows the conceptual epiplexity decomposition: model bits represent learned structure, residual bits represent what remains unpredictable to the bounded learner. In overparameterized neural networks, the literal uncompressed parameter count is not the intended $L_R(M)$; the relevant object is a code length for the trained predictor, such as compressed weights, stochastic complexity, PAC-Bayes code length, or another fixed compression scheme. This prevents billions of unused parameters from mechanically making every dataset high-epiplexity.

\subsection{Benchmark-adjusted financial epiplexity}

In finance, a benchmark adjustment is useful. Let $M_0$ be a null model, such as an iid return model, historical volatility model, random-walk model, GARCH benchmark, or no-skill probability model. Define
\[
\MDL_0(D)=L_R(M_0)+\sum_t \ell_{M_0}(Y_{t+h}\mid X_t^R).
\]

\begin{definition}[Incremental financial epiplexity]
The incremental financial epiplexity relative to $M_0$ is
\[
\Delta\FEpi_B(D_{T,h}^R;M_0)=L_R(M_{B,\eps}^{\star})-L_R(M_0),
\]
and the net MDL gain is
\[
G_B(D_{T,h}^R;M_0)=\MDL_0(D)-\FMDL_B(D).
\]
A representation has \emph{positive net financial epiplexity} relative to $M_0$ if
\[
G_B(D_{T,h}^R;M_0)>0.
\]
\end{definition}

The distinction between $\Delta\FEpi_B$ and $G_B$ is crucial. A more complex model can have higher model bits but still be useless if it does not reduce residual predictive bits enough. In financial terms, complexity is justified only when it buys robust compression of future-relevant uncertainty.

Thus the monetizable quantity is not raw model size, but the total compression improvement over the null after charging the model for its own description length. We write \(\Gain_B\) for this gain when no ambiguity is possible.

\begin{definition}[Return-targeted conditional epiplexity]
Let $X$ denote public information, $Z$ proprietary or alternative information, and $Y$ the future return, loss, drawdown, volatility, or breach target. The return-targeted conditional epiplexity at budget $B$ is
\[
\mathcal A_B(Y;Z\mid X)
=
\left[
\FMDL_B(Y\mid X)-\FMDL_B(Y\mid X,Z)
\right]_+ .
\]
This is the bounded-compute predictive compression of the financial target supplied by $Z$ beyond $X$. It is the clean finance object behind the phrase ``useful market structure.''
\end{definition}

\subsection{Time-varying epiplexity}

Financial structure is local in time. For a rolling window $W_t=\{t-w+1,\dots,t\}$, define
\[
\FEpi_{B,t}^{\mathrm{raw}}(w;R,h)=L_R(M_{B,t,w}^\star),
\]
where $M_{B,t,w}^\star$ is the corresponding minimal-description, $\eps_w$-optimal code on $D_{t-w+1:t,h}^R$. The associated rolling gain is
\[
\Gain_{B,t}(w;M_0)=\MDLzero(D_{t-w+1:t,h}^R)-\FMDL_B(D_{t-w+1:t,h}^R).
\]
A regime change is then not an exception to the theory; it is a change in the local epiplexity profile. This definition also makes explicit that finite-sample estimates should be reported with a window, horizon, representation, and budget.

\subsection{Epiplexity ratio}

A normalized diagnostic is
\[
\rho_B(D)=\frac{\FEpi_B^{\mathrm{raw}}(D)}{\FEpi_B^{\mathrm{raw}}(D)+\TBE_B^{\mathrm{fin}}(D)}.
\]
A very low $\rho_B$ indicates that most bits remain residual entropy. A very high $\rho_B$ can indicate either genuine structure or overfit/memorization; therefore $\rho_B$ must be interpreted together with out-of-sample loss and model stability.

\section{Interpretation in Finance}

\subsection{What financial epiplexity measures}

Financial epiplexity measures the model bits needed to encode market structure that a bounded learner can use. Examples include:
\begin{itemize}
\item latent regimes and transition probabilities;
\item volatility clustering and leverage effects;
\item factor loadings and factor rotations;
\item yield-curve and credit-spread transmission;
\item option-implied skew and volatility risk premia;
\item liquidity spirals and transaction-cost states;
\item institutional language in central-bank and earnings-call text;
\item synthetic crisis mechanisms that transfer to stress testing;
\item causal or quasi-causal market narratives stored in agent logs.
\end{itemize}

Financial time-bounded entropy is the residual uncertainty after the best bounded model has extracted what it can. This residual includes genuine randomness, unobserved information, adverse selection, microstructure noise, nonstationarity, and structure that exists but is inaccessible under the budget.

\subsection{Relation to stylized facts}

Financial returns exhibit stylized facts: heavy tails, volatility clustering, weak linear autocorrelation of returns, nonlinear dependence, aggregational Gaussianity, and leverage effects \parencite{mandelbrot1963,cont2001}. A stylized fact is not automatically epiplexity. It becomes epiplexity only when a bounded model can encode it in reusable form and improve predictive coding, risk estimation, or transfer. For example, volatility clustering has high financial epiplexity for risk management if a GARCH, stochastic-volatility, Markov-switching, or neural volatility model reduces log-loss for future volatility or drawdown \parencite{engle1982,bollerslev1986,hamilton1989}. The same fact may have lower epiplexity for next-day directional equity prediction.

\subsection{Relation to market efficiency}

Financial epiplexity does not refute market efficiency. It refines the observer. A market may be efficient relative to simple price-only strategies and still contain high epiplexity for a learner with access to richer representations, longer memory, alternative data, or better compute. Conversely, a market may show in-sample predictability that has low epiplexity because it is fragile, nontransferable, or transaction-cost dominated.

Thus the efficient-market question should be decomposed as:
\[
\text{efficiency relative to what representation, compute budget, horizon, and cost structure?}
\]

\subsection{Relation to martingales, the \texorpdfstring{\(P\)--\(Q\)}{P-Q} wedge, and net exploitability}

The framework also remains consistent with the standard apparatus of mathematical finance. No-arbitrage supplies an equivalent pricing measure \(Q\) under which discounted prices are local martingales; it does not imply that the physical conditional law under \(P\) contains no structure, nor that prices are independent Gaussian random walks \parencite{noguer2026hardtopredict}. Epiplexity is therefore not a violation of martingality. It is a statement about bounded predictive compression under a specified real-world information set and target.

This distinction is clearest through a Doob-style decomposition. For an adapted cumulative return or payoff process, write informally
\[
X_n=X_0+M_n+A_n,
\]
where \(M\) is a martingale innovation relative to the filtration and \(A\) is the predictable component. Financial epiplexity can attach to \(A\), to conditional variance, to jump intensity, to tail shape, or to latent regime structure. The remaining innovation, together with inaccessible structure, is financial time-bounded entropy. Thus the theory does not say that all predictability is alpha: predictable premia, volatility states, and liquidity conditions may be real yet risk-compensated, already priced, or costly to trade.

Similarly, the \(P\)--\(Q\) wedge belongs to the first-layer information ledger. In the lognormal benchmark its relative entropy is proportional to one half of squared Sharpe times horizon; this measures the informational distortion needed to price risk, not by itself a usable trading edge. In the present theory, the wedge can enter the null benchmark \(M_0\), the target definition, or the monetization constant \(\kappa\), while the epiplexity term measures additional bounded-compute compression under the physical learning problem. Net exploitability is obtained only after risk adjustment, costs, impact, capacity, and survival.

\section{Basic Results}

\subsection{Equal entropy does not imply equal financial epiplexity}

\begin{proposition}[Entropy-epiplexity separation]
There exist two stationary binary return-sign processes $X^{(0)}$ and $X^{(1)}$ such that
\[
H(X_t^{(0)})=H(X_t^{(1)})=1
\]
for every $t$, but for sufficiently large $T$ and a model class containing iid Bernoulli and first-order Markov models,
\[
\FEpi_B^{\mathrm{raw}}(D_T^{(1)})>\FEpi_B^{\mathrm{raw}}(D_T^{(0)})
\]
and
\[
\FMDL_B(D_T^{(1)})<\FMDL_B^{\mathrm{iid}}(D_T^{(1)}),
\]
whereas $D_T^{(0)}$ has no positive net gain over the iid model.
\end{proposition}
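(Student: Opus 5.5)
The construction takes $X^{(0)}$ to be iid Bernoulli$(1/2)$ signs and $X^{(1)}$ to be a symmetric two-state Markov chain with persistence $\Prob(X_{t+1}=X_t)=q$ for a fixed $q\neq 1/2$ (say $q=0.8$), started from its stationary law. The stationary law of the symmetric chain is uniform on $\{0,1\}$, so $H(X_t^{(0)})=H(X_t^{(1)})=1$ for every $t$. The dataset $D_T^{(i)}$ is taken with $X_t^R=X_t^{(i)}$ and target $Y_{t+1}=X_{t+1}^{(i)}$. The model class $\M_B$ is the iid Bernoulli family together with the first-order Markov family. Parameters are coded by a fixed two-part code at precision $T^{-1/2}$, so an iid model costs $c_0+\tfrac12\log_2 T+O(1)$ bits and a Markov model costs $c_1+\log_2 T+O(1)$ bits, with $c_1>c_0$. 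The fair-coin null $M_0$ is the iid model with $p=1/2$, and I give it the minimal code length among all models in the class. The tolerance $\eps_T$ is fixed at $O(1)$.

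The argument compares code lengths using the law of large numbers and the central limit theorem for the empirical transition counts.

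\emph{Process $X^{(1)}$.} The best iid code pays about $T\cdot 1$ bits of data, since the marginal is uniform. The Markov model with $\hat q$ pays $T h_2(q)+O_P(\sqrt T)$, where $h_2$ is the binary entropy. Because $h_2(q)<1$, the Markov code wins by $T(1-h_2(q))-O(\log T)-O_P(\sqrt T)\to\infty$ bits with probability tending to one. Hence $\FMDL_B(D_T^{(1)})<\FMDL_B^{\mathrm{iid}}(D_T^{(1)})$. Every element of the $\eps_T$-optimal set must then be a Markov model, so $\FEpi_B^{\mathrm{raw}}(D_T^{(1)})\ge c_1+\log_2 T+O(1)$.

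\emph{Process $X^{(0)}$.} A fitted Markov model saves at most a likelihood-ratio statistic's worth of data bits. This is $\tfrac{1}{2\ln 2}\chi^2_2$, which is $O_P(1)$, or $O(\log\log T)$ almost surely by the law of the iterated logarithm. It pays an extra $\tfrac12\log_2 T$ in model bits. Likewise, a fitted iid $\hat p$ saves only $O_P(1)$ over $p=1/2$ while paying $\tfrac12\log_2 T$. So for large $T$ the minimal-$L_R$ element of $\mathcal S_{B,\eps_T}$ is the fair-coin model $M_0$ itself, or at worst some iid model. This gives $\FEpi_B^{\mathrm{raw}}(D_T^{(0)})\le c_0+\tfrac12\log_2 T+O(1)$, which is strictly below the $X^{(1)}$ value, and it gives $G_B(D_T^{(0)};M_0)\le 0$ because the null is within the class.

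The main obstacle is making "for sufficiently large $T$" precise for random data. I would state both conclusions as holding almost surely eventually, using the strong law for $X^{(1)}$ and the LIL bound $O(\log\log T)\ll \tfrac12\log_2 T$ for $X^{(0)}$. Alternatively, I would state them with probability tending to one, or in expectation using the standard $\tfrac{k}{2}\log_2 T$ regret of two-part codes.

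A second subtlety is the tie-breaking rule. It selects the minimal-$L_R$ model within tolerance $\eps_T$, which is why $\eps_T=O(1)$ is needed. With $\eps_T=O(1)$, the $\Theta(T)$ gap for $X^{(1)}$ excludes iid models, and the $\Theta(\log T)$ penalty for $X^{(0)}$ excludes Markov models.
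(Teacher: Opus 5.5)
Your proposal is correct and follows the paper's own argument: iid fair-coin signs versus a symmetric two-state chain with uniform stationary law, with the Markov code winning by roughly $T(1-h_2(q))$ against $\OO(\log T)$ parameter bits. It also supplies what the paper leaves implicit: explicit parameter precision, the need for a small tolerance $\eps_T$ so the tie-breaker picks a Markov model on $D_T^{(1)}$, and the $\chi^2_2$/LIL control showing nothing beats the fair coin on $D_T^{(0)}$. One small fix: $M_0\in\M_B$ gives $\Gain_B(D_T^{(0)};M_0)\ge 0$, not $\le 0$; the equality $\Gain_B(D_T^{(0)};M_0)=0$ comes instead from your own observation that every fitted iid or Markov model loses to $M_0$ by a divergent margin for large $T$.
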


\begin{proof}
Let $X_t^{(0)}$ be iid Bernoulli$(1/2)$. Let $X_t^{(1)}$ be a two-state Markov chain on $\{0,1\}$ with stationary distribution $(1/2,1/2)$ and transition probability $q<1/2$ of switching state. Both processes have one-period marginal entropy equal to one bit. For $X^{(0)}$, the iid Bernoulli model is optimal up to finite-sample fluctuations and gives residual code length approximately $T$ bits with constant model description length. For $X^{(1)}$, a first-order Markov model has conditional entropy $h_2(q)<1$, where $h_2$ is the binary entropy function, and residual code length approximately $T h_2(q)$ plus $\OO(\log T)$ parameter-coding bits. Once $T(1-h_2(q))$ exceeds the additional description length of the Markov transition parameters, the Markov model achieves shorter MDL. Its model bits encode temporal structure absent in the iid process. Hence equal marginal entropy does not imply equal financial epiplexity.\end{proof}

\begin{remark}
In finance, this proposition says that a random return sign series and a persistent regime sign series can have the same one-step entropy, but only the latter teaches a bounded learner reusable temporal structure.
\end{remark}

\subsection{Representation and ordering matter}

\begin{proposition}[Representation and ordering dependence]
Let $D_T=\{(X_t,Y_{t+h})\}_{t=1}^{T-h}$ be a financial dataset whose target depends on an unrecorded latent state $S_t$ through a bounded Markovian rule. Assume the state is not contained in each individual record $X_t$, but can be inferred by a bounded sequential coder that carries information across adjacent records. Let $\pi(D_T)$ be a random permutation that destroys adjacency while preserving the multiset of feature-target pairs. Then there exists a bounded sequential model class $\M_B$ such that, with high probability over $\pi$,
\[
\FEpi_B(D_T) \neq \FEpi_B(\pi(D_T)).
\]
In particular, if temporal order is necessary to infer the latent state, $D_T$ has positive net financial epiplexity while $\pi(D_T)$ does not. If the relevant lags or state variables are already included inside each $X_t$, then conditional coding of $Y_{t+h}\mid X_t$ can become order-invariant; this is the excluded case.
\end{proposition}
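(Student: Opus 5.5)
The plan is to mirror the proof of the entropy--epiplexity separation proposition: build an explicit witness, then compare two-part code lengths on the ordered and permuted data. First I will fix a concrete instance satisfying the hypotheses. Let $S_t$ be a two-state stationary Markov chain with switching probability $q<1/2$ and uniform stationary law. Let each record $X_t$ be a noisy emission of $S_t$: $X_t=S_t$ with probability $1/2+\delta$, independently. Let the target be $Y_{t+h}=S_{t+h}$ (or a deterministic function of it). Choose $\delta$ small enough that a single record carries little information about $Y_{t+h}$, so that the conditional entropy satisfies $H(Y_{t+h}\mid X_t)\approx 1$. A forward filter over adjacent records, by contrast, can push the entropy rate of $Y$ given the past records strictly below $H(Y_{t+h}\mid X_t)$.

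For $\M_B$ I take two families. The first is the memoryless conditional models $p_M(y\mid x)$; this includes the null $M_0$ and the best single-record predictor. The second is the bounded sequential coders that run a two-state HMM forward filter, with parameters quantized to precision $T^{-1/2}$, so that $L_R(M)=\OO(\log T)$.

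The argument then proceeds in three steps.

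\textbf{Step 1: the ordered data.} On $D_T$, the ergodic theorem gives that the HMM coder's residual length is $T\,\bar h+o(T)$. Here $\bar h$ is the filtered conditional entropy rate of the target given past records. The best memoryless coder instead pays $T\,H(Y\mid X)+o(T)$. Since $\bar h<H(Y\mid X)$, for large $T$ the gain exceeds the $\OO(\log T)$ extra model bits. So the $\eps_T$-optimal set, with $\eps_T=o(T)$ but $\eps_T\gg\log T$, excludes memoryless models. Hence $\FEpi_B(D_T)$ equals the code length of a sequential model, and $G_B(D_T;M_0)>0$.

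\textbf{Step 2: the permuted data.} Under a uniform random permutation $\pi$, the records seen by the sequential coder in positions $t-1,t-2,\dots$ are, with high probability, drawn from far-apart times. Mixing of the chain makes them nearly independent of the current pair. I will show that the filter's expected code length for each $Y$ is then at least $H(Y\mid X)-o(1)$, uniformly over the finitely many quantized sequential models. Two tools do this: a union bound over the $\mathrm{poly}(T)$ models, and concentration for the permuted sum, using exchangeability together with a bounded-differences (McDiarmid-type) inequality for sampling without replacement. Consequently no model beats the best memoryless coder by more than $o(T)$ bits. Meanwhile the memoryless coder's cost is exactly permutation-invariant, since it depends only on the multiset of pairs.

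\textbf{Step 3: conclusion.} With the tie-breaking rule of minimal $L_R$ within the $\eps_T$-optimal set, the selected model on $\pi(D_T)$ is memoryless. Its description length differs from that of the sequential HMM coder by at least the state-transition parameter bits, so $\FEpi_B(D_T)\neq\FEpi_B(\pi(D_T))$. I will take $\delta$ small enough that even the best memoryless coder's gain over $M_0$ is below its extra parameter cost. Then the permuted data has no positive net gain. For the excluded case, I will note that if the lags or the filtered state are included inside $X_t$, the conditional code $\sum_t\ell_M(Y_{t+h}\mid X_t)$ is a sum over pairs and hence permutation-invariant. The memoryless model class then already attains the optimum on both datasets.

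The main obstacle is Step 2: a high-probability lower bound, uniform over the sequential class, on the code length of a filter run on shuffled data. My first attempt is a conditioning argument. Given the random neighbours, the current $Y$ is nearly independent of them up to mixing error $(1-2q)^{\text{gap}}$, and a random permutation makes typical gaps of order $T$. I expect the delicate part to be controlling the rare near-adjacent pairs and the dependence introduced by sampling without replacement. If that becomes messy, I will fall back to a weaker statement in expectation over $\pi$, and then use Markov's inequality on the nonnegative excess.
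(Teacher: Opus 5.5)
Your route is the paper's own sketch made quantitative: on ordered data a stateful coder buys $\Theta(T)$ bits for $\OO(\log T)$ model bits, and on shuffled data no cross-record state helps. The first claim $\FEpi_B(D_T)\neq\FEpi_B(\pi(D_T))$ goes through once Step~2 uses the right tool, and mixing is not that tool. At a late position of the shuffled order, the current record's time-neighbour lies somewhere in the history with probability close to one; it is only absent from a short window of recent positions. Uniformity over quantized filters with parameters near $0$ or $1$ also makes both the forgetting step and the bounded-difference constants delicate. What actually removes the sequential advantage is exchangeability. Conditional on the multiset, the shuffled target string given the shuffled features is uniform on the set $\mathcal A$ of consistent arrangements. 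Hence for every fixed causal coder $\Pr_\pi[\ell_M\le\log_2|\mathcal A|-k]\le 2^{-k}$. A union bound over the $\mathrm{poly}(T)$ models then shows that, with high probability, no model beats the best memoryless code by more than $\OO(\log T)$ bits; that code is itself within $\OO(\log T)$ of $\log_2|\mathcal A|$. This is exactly what licenses your tolerance $\eps_T\gg\log T$. A McDiarmid route gives deviations of order $\sqrt{T\log T}$ and would need $\eps_T\gg\sqrt{T\log T}$ instead. Your Markov fallback fails as stated, because it bounds each model's expected excess rather than the maximum over $\M_B$.

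The genuine gap is in the \emph{in particular} clause, for two reasons.

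\emph{First}, for a fixed $\delta>0$ the memoryless coder gains about $T\,I(X_t;Y_{t+h})$ bits, which is linear in $T$. No fixed small $\delta$ keeps this below an $\OO(\log T)$ parameter cost. You need $\delta=0$, which is precisely the case in which order is necessary.

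\emph{Second}, even with $\delta=0$, the best memoryless coder having no positive net gain does not give $\Gain_B(\pi(D_T);M_0)\le 0$. Step~2 only bounds how far sequential models can undercut memoryless ones ($o(T)$ in your version, $\OO(\log T)$ with the counting bound), and that is not smaller than their own extra description length. You need to compare every model directly with the null, at a resolution below that extra length. One way:
\begin{itemize}
\item By Markov's inequality, $\Pr[\log_2(Q_M/P)(y^\pi)\ge k]\le 2^{-k}$ for the true law $P$ of the shuffled targets.
\item The quantity $\log_2\bigl(2^{T-h}P(y^\pi)\bigr)$ is $\OO_P(1)$, since it only reflects the overdispersion of the count of ones under the Markov chain.
\item So the probability that some model strictly beats the fair-coin null is at most $2^{\OO(1)}\sum_{M\neq M_0}2^{-(L_R(M)-L_R(M_0))}$ plus a small term.
\item This vanishes if you pad the non-null codes so that $\sum_{M\neq M_0}2^{-L_R(M)}=o\bigl(2^{-L_R(M_0)}\bigr)$, which costs only a slowly growing number of extra bits.
\end{itemize}
The paper asserts this step only informally (that no state can be carried across shuffled records), so you are not missing an argument the paper supplies. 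Still, your Step~3 as written does not establish the clause.
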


\begin{proof}
The permutation preserves marginal empirical frequencies and the multiset of pairs, but it destroys the adjacency information needed by the sequential coder to update its latent state. On the ordered data, a stateful model can encode the transition rule and reduce residual code length. On the permuted data, no such state can be carried across records because adjacent records are no longer adjacent market states. Hence the bounded MDL decomposition differs. If $X_t$ already contains a sufficient state, the proof fails exactly because no cross-record memory is needed; this is why the hypothesis is stated explicitly.
\end{proof}

\begin{remark}
This is the finance analogue of the fact that a central-bank statement followed by a yield-curve move is not equivalent to the same words and prices in random order. The arrow of market time carries accessible structure.
\end{remark}

\subsection{Epiplexity need not be monotone in compute}

\begin{proposition}[Nonmonotonicity in compute]
There exist datasets generated by a short deterministic rule for which $B\mapsto \FEpi_B^{\mathrm{raw}}(D)$ is not monotone.
\end{proposition}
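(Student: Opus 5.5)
The plan is a three-budget construction $B_0\preceq B_1\preceq B_2$ along which $\FEpi_B^{\mathrm{raw}}$ first rises and then falls. This is possible because $\FEpi_B^{\mathrm{raw}}(D)=L_R(M_{B,\eps}^\star)$ is the model length of a minimal-description $\eps_T$-optimal code. Enlarging $B$ can only lower $\FMDL_B$. Whether the model-bit part goes up or down depends on which kind of code becomes feasible. The nested model classes will be $\M_{B_0}\subseteq\M_{B_1}\subseteq\M_{B_2}$, all fixed before looking at the sample.

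\textbf{The data.} Take a short deterministic rule that is computationally expensive to run, such as a cryptographic pseudorandom generator. Concretely, let the target be $Y_{t+h}=\mathrm{bit}_t(G(s))$, where $G$ is a PRG with a $k$-bit seed $s$ and evaluating $G$ needs about $C_G$ units of compute per record. Then modify the stream so that it contains a cheap but imperfect pattern. One way is to XOR the PRG output with a biased, structured overlay. A simpler way is to let $D$ be the PRG bits concatenated with a block of length $T$ that has an elaborate but cheaply computable pattern, for instance a Markov or lookup-table structure with many parameters $m$. The key property is that the pattern is itself a deterministic consequence of the seed, so that the full rule is short.

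\textbf{Computing $\FEpi$ at each budget.}
\begin{itemize}
\item[(i)] At the smallest budget $B_0$, only iid models are feasible. The PRG output looks uniform, so the best code is the $\OO(1)$-bit iid model, and $\FEpi_{B_0}^{\mathrm{raw}}=\OO(1)$.
\item[(ii)] At $B_1$, the model class admits the $m$-parameter table model but still cannot evaluate $G$. The table saves roughly $cT$ bits and costs $\OO(m\log T)$ model bits. For $T$ large it is therefore $\eps_T$-optimal, and every $\eps_T$-optimal code must spend at least about $m$ bits on the pattern. This gives $\FEpi_{B_1}^{\mathrm{raw}}\gtrsim m$. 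The argument needs a lower bound: no code of fewer than about $m$ model bits that is feasible at $B_1$ achieves comparable compression. I would justify it by choosing the table entries themselves as PRG outputs. Then any shorter $B_1$-feasible description would distinguish $G$ from random, contradicting the pseudorandomness assumption at budget $B_1$.
\item[(iii)] At $B_2\succeq C_G$, the program ``run $G$ on $s$'' is feasible. It costs $k+\OO(1)$ bits and predicts every target bit with certainty, so its total code is about $k+\OO(1)\ll m+\text{residual}$. Every $\eps_T$-optimal code then has model length at most about $k+\OO(1)$, which gives $\FEpi_{B_2}^{\mathrm{raw}}\le k+\OO(1)<m$.
\end{itemize}
Taking $m\gg k$, the values $\OO(1)<m>k$ show that $B\mapsto\FEpi_B^{\mathrm{raw}}(D)$ is not monotone.

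\textbf{Main obstacle.} The hard step is the lower bound in (ii): showing that at the intermediate budget every near-optimal code really needs about $m$ model bits, rather than some clever cheap short program. I would handle it through the pseudorandomness hypothesis, which converts the existence of a short feasible code into a distinguisher. If that is too heavy, there is a fallback that weakens the statement to an existential one about an explicitly specified, hand-chosen model class $\M_B$, which the proposition permits. Take $\M_{B_1}$ to be exactly \{iid, $m$-parameter table\} and $\M_{B_2}$ to add the generator program. Then the three minimizations become finite comparisons of explicit code lengths, with the tie-breaking rule of the definition applied. The tolerance $\eps_T$ must be kept below the gaps $cT-\OO(m\log T)$ and $m-k$, so that the near-optimal sets are the singletons claimed.
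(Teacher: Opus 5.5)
You are right that the lower bound in (ii) is the crux, but the cryptographic justification for it fails. Pseudorandomness constrains the distribution of $G(s)$ over random seeds against a fixed bounded test. What you need is different: the one realized table must have no $B_1$-feasible description of length $\le k+\OO(1)$ (the cost of the generator program at $B_2$), let alone one below about $m$. Such a description may depend on $s$. Finding it requires a search over models that the infimum in $\FMDL_B$ performs for free but no $B_1$-bounded distinguisher can. The natural reduction (guess a short program, run it, compare) has advantage only about $2^{-(m-\delta)}$ when the table compresses to $m-\delta$ bits. It therefore yields a contradiction only if the generator is assumed secure against $B_1$-bounded tests at a level below roughly $2^{-k}$ (below $2^{-m}$ for your bound $\gtrsim m$). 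That assumption is the incompressibility you want, restated; it is not ordinary pseudorandomness.

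A concrete counterexample: let $G=G_{\mathrm{fast}}\circ f$, where $f$ is a permutation of $\{0,1\}^k$ that costs more than $B_1$ to evaluate and $G_{\mathrm{fast}}$ is a cheap PRG. Since $f$ is a permutation, $f(s)$ is uniform whenever $s$ is, so $G$ is pseudorandom, and $G$ is expensive to run from $s$. Yet the $B_1$-feasible program ``run $G_{\mathrm{fast}}$ on the hardwired string $f(s)$'' has $k+\OO(1)$ bits and reproduces both the PRG block and the table with zero residual. Hence $\FEpi^{\mathrm{raw}}_{B_1}\le k+\OO(1)$, and the separation $m>k$ disappears.

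There is a second, independent problem. The framework allows $\M_{B_1}$ to contain stateful or mixture coders (the ordering proposition uses a sequential coder). A period-$m$ copy rule or a uniform mixture over tables then attains the same total code length up to $\OO(1)$ with only $\OO(\log m)$ model bits, paying the table bits in the residual. Since $\FEpi^{\mathrm{raw}}_B$ selects the minimal model length among $\eps_T$-optimal codes, this pushes the value at $B_1$ down to $\OO(\log m)$, possibly below the value $k+\OO(1)$ at $B_2$.

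Your fallback, by contrast, is sound. Take $\M_{B_0}=\{M_{\mathrm{iid}}\}$, $\M_{B_1}=\{M_{\mathrm{iid}},M_{\mathrm{table}}\}$, $\M_{B_2}=\M_{B_1}\cup\{M_{\mathrm{gen}}\}$, with $\eps_T$ below the gaps. The $\eps_T$-optimal sets are then singletons, and the values $\OO(1)$, $\approx m$, $k+\OO(1)$ are nonmonotone once $m\gg k$. That is a legitimate reading of the existential claim, and it is the argument you should present. Note, though, that it uses neither the PRG nor pseudorandomness: any $k$-bit-seed generator with $m\gg k$ works, and the nonmonotonicity is put in through the cost assignment rather than derived.

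The paper's route is different in kind. It takes a natural generator (quantized chaotic map, cellular automaton, filtered linear recurrence) whose own output has local motifs visible at an intermediate budget. It explicitly excludes a CSPRNG for that step, because a polynomial-budget pattern library that saves bits would itself be a distinguisher. Your overlay sidesteps that objection by putting the intermediate structure in a separate block. Placing the barrier in the evaluation cost of the short program also matches $\M_B=\{M:C(M;D,R)\preceq\mathbf B\}$ more directly than the paper's appeal to the cost of global inversion. But the compression-implies-distinguisher step is sound for generic statistics such as block frequencies, which are fixed efficient tests, and fails for arbitrary seed-dependent descriptions. In the end both arguments rest on stipulating what is and is not reachable at the intermediate budget.
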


\begin{proof}
Use a short deterministic generator with detectable local structure but expensive global inversion, such as a coarsely quantized chaotic map, a finite-time elementary cellular automaton with an unknown initial condition, or a lossy filtered linear recurrence. Its seed and program have description length $\OO(\log T)$, but recovering them from the observed path is above the intermediate budget. At a small budget $B_1$, the observer cannot learn either the generator or useful local motifs, so the best bounded code is essentially the trivial marginal coder. At an intermediate budget $B_2$, local motifs, block frequencies, or finite-horizon regularities are detectable and worth storing as a pattern library of length $L_2\gg\OO(\log T)$ that lowers residual code length. At a larger budget $B_3$, the observer can solve the global inverse problem and encode the short generator and seed directly; model bits collapse to $\OO(\log T)<L_2$ and residual bits collapse to $\OO(1)$. Hence $B\mapsto\FEpi^{\mathrm{raw}}_B(D)$ rises and then falls. A cryptographically secure pseudorandom generator would not serve for the intermediate step, because any polynomial-budget pattern library that reduces residual bits would itself be a distinguisher.
\end{proof}

\begin{remark}
In finance, a model may first learn many empirical regularities -- volatility clusters, factor rotations, crisis templates -- and later replace them with a more compact structural state-space model. More compute does not always mean more model bits; it can mean better compression of structure.
\end{remark}

\subsection{When regime information is worth its description length}

\begin{proposition}[Regime value criterion]
Let $M_0$ be a no-regime model and $M_Z$ a model that uses a regime variable or regime proxy $Z_t$. Suppose both models are in $\M_B$, and let
\[
\Delta \ell_t = \ell_{M_0}(Y_{t+h}\mid X_t)-\ell_{M_Z}(Y_{t+h}\mid X_t,Z_t).
\]
Then $M_Z$ has positive net MDL gain over $M_0$ if and only if
\[
\sum_{t=1}^{T-h} \Delta \ell_t > L_R(M_Z)-L_R(M_0).
\]
\end{proposition}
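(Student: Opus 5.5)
The criterion is a direct rearrangement of two-part code lengths, so I will prove it by unfolding the definition of net MDL gain for the specific comparison of $M_Z$ against $M_0$. The first step is to fix what ``positive net MDL gain of $M_Z$ over $M_0$'' means here. It is the pairwise version of $G_B$: the quantity $\MDL_0(D)-\MDL_{M_Z}(D)$, where
\[
\MDL_{M_Z}(D)=L_R(M_Z)+\sum_{t=1}^{T-h}\ell_{M_Z}(Y_{t+h}\mid X_t,Z_t),
\]
and $\MDL_0(D)=L_R(M_0)+\sum_t \ell_{M_0}(Y_{t+h}\mid X_t)$ is exactly as in the benchmark definition. The hypothesis $M_0,M_Z\in\M_B$ matters only in that both codes are admissible under the budget. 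In particular, the cost of the regime proxy $Z_t$ is assumed to be inside $L_R(M_Z)$ through $L(R)$, consistent with the convention $L_R(M)=L(R)+L(M\mid R)$.

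Next, I will write the difference:
\[
\MDL_0(D)-\MDL_{M_Z}(D)=\sum_{t=1}^{T-h}\bigl[\ell_{M_0}(Y_{t+h}\mid X_t)-\ell_{M_Z}(Y_{t+h}\mid X_t,Z_t)\bigr]-\bigl[L_R(M_Z)-L_R(M_0)\bigr].
\]
The first sum is precisely $\sum_t\Delta\ell_t$. This step requires only linearity of the finite sum, and all terms are finite whenever both predictive densities are positive on the observed data. Positivity of this difference is then equivalent to the displayed inequality, which gives both directions of the ``if and only if'' at once.

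The only real subtlety is interpretive rather than computational. One might instead read the statement as concerning $G_B(D;M_0)=\MDL_0(D)-\FMDL_B(D)$, where $\FMDL_B$ is an infimum over all of $\M_B$. For that reading I will add a remark. Since $M_Z\in\M_B$, we have $\FMDL_B(D)\le\MDL_{M_Z}(D)$, so the inequality is \emph{sufficient} for $G_B>0$. Necessity holds only when the comparison is restricted to the pair $\{M_0,M_Z\}$, i.e.\ $\M_B=\{M_0,M_Z\}$ or $M_Z$ attains the infimum. I will state explicitly that the proposition is the pairwise comparison, so that the biconditional is exact.

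Finally, I will note the boundary case: equality in the criterion gives zero net gain, which matches the strict inequality in the definition of positive net financial epiplexity.
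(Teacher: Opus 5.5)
Your proposal is correct and is essentially the paper's proof: both expand $\MDL(M_0)-\MDL(M_Z)$ as $\sum_t\Delta\ell_t$ minus the extra description length $L_R(M_Z)-L_R(M_0)$, and then read off the sign. Your added remark also holds and fits the paper's pairwise reading: the biconditional is exact for the pairwise comparison, but the criterion is only sufficient for $\Gain_B(D;M_0)>0$ when $\FMDL_B$ is an infimum over all of $\M_B$.
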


\begin{proof}
By definition,
\[
\MDL(M_0)-\MDL(M_Z)=
\sum_t\ell_{M_0}(Y_{t+h}\mid X_t)-\sum_t\ell_{M_Z}(Y_{t+h}\mid X_t,Z_t)-[L_R(M_Z)-L_R(M_0)].
\]
The net gain is positive exactly when the cumulative log-loss reduction exceeds the additional representation-aware description length.\end{proof}

\begin{remark}
This is the basic rule for financial epiplexity: a macro regime, news embedding, option surface, or liquidity variable is useful only if its cumulative predictive compression exceeds its modeling cost.
\end{remark}

\subsection{Memorization is not useful epiplexity}

\begin{proposition}[Memorization penalty, finite-sample form]
For any finite dataset $D_T$, there exists a model $M_{\mathrm{mem}}$ that achieves near-zero in-sample loss by memorization with $L_R(M_{\mathrm{mem}})=\OO(T)$. Let $D_T^{\mathrm{tr}},D_T^{\mathrm{val}}$ be a time-respecting split of validation size $m$. Let $P_{\mathrm{val}}(Y\mid X)$ be the validation conditional law and $P_0(Y\mid X)$ the null conditional law. Then the expected validation code-length improvement over the null is at most
\[
\sum_{t\in \mathrm{val}} D\!\bigl(P_{\mathrm{val}}(\cdot\mid X_t)\,\big\|\,P_0(\cdot\mid X_t)\bigr).
\]
Consequently, with probability at least $1-\delta$,
\[
\Gain_B^{\mathrm{val}}(M_{\mathrm{mem}};M_0)
\le
\sum_{t\in \mathrm{val}} D\!\bigl(P_{\mathrm{val}}(\cdot\mid X_t)\,\big\|\,P_0(\cdot\mid X_t)\bigr)
-L_R(M_{\mathrm{mem}})+L_R(M_0)+\OO\!\left(\sqrt{m\log(1/\delta)}\right).
\]
If the likelihood ratios $dP_{\mathrm{val}}/dP_0$ are uniformly bounded above and below and $\|P_{\mathrm{val}}-P_0\|_{\TV}\le \eta$ conditionally, the KL term is $\OO(m\eta^2)$. Without such bounded-ratio control one should keep the displayed KL term, not replace it by a reverse-Pinsker bound.
\end{proposition}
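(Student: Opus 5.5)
The proof has three parts: an existence construction, an expected-value bound via Gibbs' inequality, and a martingale concentration step.

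\emph{Existence.} Let $M_{\mathrm{mem}}$ be a lookup-table coder that stores each training pair $(X_t,Y_{t+h})$ verbatim at finite precision. Given a stored key, it predicts the stored target with probability $1-\eps$; on unseen keys it falls back to $p_{M_0}$. This gives near-zero in-sample log-loss. The description length is $L_R(M_{\mathrm{mem}})=\OO(T)$, since the table has $T$ entries of bounded bit length under the fixed quantisation implicit in the representation $R$.

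\emph{Expected validation gain.} Fix a validation index $t$ and write $q_t=p_{M_{\mathrm{mem}}}(\cdot\mid X_t)$. Because the split respects time, $q_t$ is a fixed distribution conditional on $X_t$ and the training data. The expected per-period improvement is
\[
\E_{P_{\mathrm{val}}}\!\left[\log_2\frac{q_t(Y)}{P_0(Y\mid X_t)}\right]
=D\!\bigl(P_{\mathrm{val}}(\cdot\mid X_t)\,\big\|\,P_0(\cdot\mid X_t)\bigr)-D\!\bigl(P_{\mathrm{val}}(\cdot\mid X_t)\,\big\|\,q_t\bigr).
\]
The second term is nonnegative by Gibbs' inequality, so the expected improvement is at most the first KL term. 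The identity $\log(q/p_0)=\log(p_{\mathrm{val}}/p_0)-\log(p_{\mathrm{val}}/q)$ holds for any predictor, and memorization does no better than the oracle $q=P_{\mathrm{val}}$. Summing over $t\in\mathrm{val}$ gives the first display.

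\emph{High-probability bound.} Define the validation gain as the realized code-length difference minus the model-cost difference $L_R(M_{\mathrm{mem}})-L_R(M_0)$, matching the net gain $G_B$ restricted to the validation block. Center each realized log-ratio $\xi_t=\log_2(q_t(Y_t)/P_0(Y_t\mid X_t))$ at its conditional expectation. This yields a martingale difference sequence with respect to the validation filtration. Azuma--Hoeffding then gives a deviation term of $\OO(\sqrt{m\log(1/\delta)})$ with probability at least $1-\delta$.

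The main obstacle is boundedness, because $\log_2 q_t$ can be unbounded. I would handle it by mixing: replace $q_t$ with $(1-\gamma)q_t+\gamma p_0$. This costs at most $\log_2(1/(1-\gamma))$ bits per period but caps the ratio below by $\gamma$. The upper tail is controlled by the bounded-ratio assumption, or by truncation, which the $\OO$ constant absorbs. Choosing $\gamma$ of order $1/m$ keeps the mixing cost at $\OO(1)$ in total, and so within $\OO(\sqrt{m\log(1/\delta)})$.

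\emph{Final remark.} Under the bounded-likelihood-ratio assumption, apply the second-order expansion $D(P\|P_0)\le \chi^2(P\|P_0)$. With ratios in $[c,C]$, this gives $\chi^2\le c^{-1}\cdot 4\|P-P_0\|_{\TV}^2$, up to constants, hence $\OO(\eta^2)$ per period and $\OO(m\eta^2)$ in total. Without ratio control, KL can be infinite while TV is small, which is why the statement keeps the KL term explicitly.
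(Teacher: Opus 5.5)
Your architecture is the paper's. Its proof also bounds the expected log-ratio by maximizing at the oracle $P_{\mathrm{val}}(\cdot\mid X_t)$, then invokes a bounded-difference inequality, then a ``local quadratic expansion''. Your Gibbs decomposition makes the first step explicit and is correct.

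\textbf{The final step fails.} The inequality $\chi^2(P\|P_0)\le c^{-1}\cdot 4\|P-P_0\|_{\TV}^2$ holds when $c$ lower-bounds the \emph{atoms} $P_0(y)$, not when it lower-bounds the \emph{ratio} $r=dP/dP_0$. A counterexample:
\begin{itemize}
\item Take $P_0=(1-\eta,\eta)$ and $P=(1-2\eta,2\eta)$.
\item The ratios lie in $[2/3,2]$ and $\|P-P_0\|_{\TV}=\eta$.
\item Yet $\chi^2(P\|P_0)=\eta+\eta^2/(1-\eta)$ and $D(P\|P_0)=(2\ln2-1)\eta+O(\eta^2)$ nats.
\end{itemize}
Putting the doubled mass on a tail set of $P_0$-probability $\eta$ gives the same numbers for a fixed Gaussian null. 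With fixed ratio bounds you only get $D\le\chi^2=\E_{P_0}[(r-1)^2]\le 2\|r-1\|_\infty\,\eta=O(\eta)$. So the clause is false under the hypothesis as written, and the paper's quadratic expansion tacitly needs a stronger one. You must add either $\|dP_{\mathrm{val}}/dP_0-1\|_\infty=O(\eta)$, or a finite alphabet with $\min_y P_0(y\mid X_t)\ge c_0$. In the latter case $D\le\chi^2\le 2\eta^2/c_0$ nats per period.

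\textbf{Your boundedness fix for Azuma does not give the stated rate.} The paper's proof simply asserts bounded differences, so you were right to address this; the fix is what fails.
\begin{itemize}
\item With $\gamma\asymp 1/m$, the increments range over about $\log_2 m$ bits, so Azuma gives $\OO(\log m\,\sqrt{m\log(1/\delta)})$.
\item A constant $\gamma$ removes the logarithm but costs $\Theta(m)$ mixing bits.
\item The upper side is not controlled either. The hypothesis concerns $dP_{\mathrm{val}}/dP_0$, whereas $\xi_t$ is large whenever the memorizer puts mass $1-\eps$ on a stored target to which $P_0$ gives tiny probability.
\item Truncating $\xi_t$ from above lowers the sum, which is the wrong direction for an upper bound.
\end{itemize}

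\textbf{A clean repair} is to split $\xi_t=\log_2\frac{q_t(Y_t)}{P_{\mathrm{val}}(Y_t\mid X_t)}+\log_2\frac{P_{\mathrm{val}}(Y_t\mid X_t)}{P_0(Y_t\mid X_t)}$.
\begin{itemize}
\item First sum: $q_t$ is fixed before $Y_t$ is revealed, so $\E\bigl[q_t(Y_t)/P_{\mathrm{val}}(Y_t\mid X_t)\,\big|\,\text{past}\bigr]\le 1$. The product of these ratios is therefore a nonnegative supermartingale with mean at most one. Markov's inequality then gives $\sum_t\log_2\frac{q_t}{P_{\mathrm{val}}}\le\log_2(1/\delta)$ with probability at least $1-\delta$, with no condition on the memorizer's forecasts.
\item Second sum: it is a martingale over fixed laws whose conditional means sum exactly to the KL term. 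Under ratio bounds $[c,C]$ its increments lie in $[\log_2 c,\log_2 C]$, so Azuma gives $\log_2(C/c)\sqrt{m\ln(1/\delta)/2}$.
\end{itemize}
Replacing $\delta$ by $\delta/2$ and absorbing the $\log_2(2/\delta)$ term, this is the claimed $\OO(\sqrt{m\log(1/\delta)})$.
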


\begin{proof}
The lookup table can encode the training targets, but on the validation segment it has no access to validation labels except through the conditional law it has learned. For a fixed evaluation protocol, its expected code-length improvement over the null at validation point $t$ is
\[
\E_{P_{\mathrm{val}}}
\left[\log_2\frac{p_{M_{\mathrm{mem}}}(Y\mid X_t)}{p_0(Y\mid X_t)}\right].
\]
This is maximized, over all predictors using the same validation information, by the true validation conditional law $P_{\mathrm{val}}(\cdot\mid X_t)$, and the maximum value is
\[
D\!\left(P_{\mathrm{val}}(\cdot\mid X_t)\,\middle\|\,P_0(\cdot\mid X_t)\right).
\]
Summing gives the first display. A bounded-difference concentration inequality for the fixed validation protocol gives the empirical deviation term of order $\sqrt{m\log(1/\delta)}$. The model still pays the lookup-table code on the MDL side. In the no-reusable-structure case the KL term vanishes, so validation-adjusted gain is negative once the lookup-table code dominates the concentration term. Under bounded likelihood-ratio conditions the local quadratic expansion of KL around $P_0$ gives the stated $\OO(m\eta^2)$ bound.
\end{proof}

\section{From Epiplexity to Alpha: Monetization Bounds}
\label{sec:alpha-bounds}

The previous sections define financial epiplexity as learnable market structure. This section makes the alpha relationship exact: epiplexity is not alpha, but it bounds alpha. The monetizable quantity is not the raw model bits $\FEpi^{\mathrm{raw}}_B$ but the \emph{switching-closed net MDL gain} $\Gain_B^{\mathrm{sw}}(D;M_0)$, the total compression improvement over the null model after model and switching bits have already been charged. A complex model can carry many bits and monetize nothing; only compression in excess of its own description cost is potentially tradable.

\subsection{The Kelly--Cover bridge}

The classical link between code length and wealth is the growth-optimal correspondence: in a complete-odds horse-race market, the growth-optimal bet under forecast $Q$ is proportional betting, and the expected log-growth shortfall of forecast $Q$ against the optimal forecast $P$ is exactly the relative entropy $D(P\|Q)$ in bits per period \parencite{kelly1956,cover2006}. Barron and Cover give the direct financial ancestor of the present theorem: the financial value of side information is bounded by information \parencite{barroncover1988}; universal-portfolio and market-selection results develop parallel code-length/wealth links in actual asset markets \parencite{algoetcover1988,cover1991,coverordentlich1996,blumeeasley2006,sandroni2000}. Equivalently, one bit of forecast improvement is worth at most one bit of log-wealth growth per unit bet. Real markets are incomplete, levered, impacted, costly, and strategically adaptive, which only weakens the conversion. 
We encode this imperfection as one economic assumption. Its key design
choice is that the code-length comparison is made against the forecast
the positions \emph{actually implement}, not against an abstract model
output. Fix the admissible position class and the null forecast
$Q_{0,t}=p_{M_0}(\cdot\mid X^R_t)$. An \emph{induced-forecast map}
assigns to every admissible position $\pi_t$ at time $t$ a conditional
forecast $\widehat Q_t=\iota_t(\pi_t)$ that is
$\F_t$-measurable, computable within the budget $B$ given the
computations the policy already performs, and satisfies
$\iota_t(\pi^0_t)=Q_{0,t}$ whenever $\pi^0_t$ is the benchmark position.
In the frictionless complete horse race the map is exact: the
normalized bet fractions \emph{are} the forecast. In general markets,
$\iota_t$ is the market-equivalent forecast of the positions on the
traded span. The prequential code charged to a trading policy is the
code of its induced forecasts $\{\widehat Q_t\}$. Throughout,
$\Delta W_t$ denotes the conditional expected per-period excess
log-growth of the policy over the benchmark in natural-log units,
$\Delta W_t=\E\!\left[\log\!\frac{V^{\pi}_{t+h}}{V^{\pi}_{t}}
-\log\!\frac{V^{0}_{t+h}}{V^{0}_{t}}\;\middle|\;\F_t\right]$.

\begin{assumption}[Two-sided monetization bound]
\label{ass:monetize}
There exist a growth-conversion constant $\kappa_{\mathrm W}\in(0,1]$ and an
induced-forecast map $\iota$ such that, for every admissible trading
policy and every $t$,
\begin{equation}\label{eq:two-sided}
\Delta W_t \;\le\; \kappa_{\mathrm W} \ln 2\;
\E\!\left[\,\ell_{Q_0}(Y_{t+h}\mid X^R_t)
-\ell_{\widehat Q_t}(Y_{t+h}\mid X^R_t)\;\middle|\;\F_t\right],
\end{equation}
with \emph{no positive part}: the inequality is required for both signs
of the conditional code improvement. The factor $\ln 2$ converts bits
of code length into nats of log-growth. In the frictionless complete
horse race, \eqref{eq:two-sided} holds with equality at
$\kappa_{\mathrm W}=1$.
\end{assumption}

Two transparent conditions imply \eqref{eq:two-sided} under symmetric
charging of frictions to policy and benchmark. Write
$\Delta\ell_t=\E[\ell_{Q_0}-\ell_{\widehat Q_t}\mid\F_t]$.
(i) \emph{Improvements convert imperfectly}: on periods with
$\Delta\ell_t\ge 0$, betting converts forecast improvement into growth
at rate at most $\kappa_{\mathrm W}\le 1$; this is the
Kelly--Cover--Barron--Cover direction weakened by spanning, impact,
costs, and execution. (ii) \emph{Deteriorations are not subsidized}: on
periods with $\Delta\ell_t<0$, the policy loses at least its
frictionless code deterioration, $\Delta W_t\le \ln 2\,\Delta\ell_t$;
frictions can only add to that loss, and since $\kappa_{\mathrm W}\le 1$ and
$\Delta\ell_t<0$, $\ln 2\,\Delta\ell_t\le\kappa_{\mathrm W}\ln 2\,\Delta\ell_t$,
so \eqref{eq:two-sided} follows. Condition (ii) is sound precisely
because the comparison is made against the \emph{induced} forecast: a
desk that corrects a bad model forecast at the position stage is, by
construction, implementing the corrected forecast, and it is the
corrected forecast that is charged in the code. There is no wedge
between the forecast charged and the forecast traded, hence no free
option hidden inside a period.

\Cref{ass:monetize} says that betting cannot amplify information; it can only convert it imperfectly, in either direction.
The absence of positive parts is what makes the accounting telescope:
summed over time, the per-period bounds aggregate into a \emph{net}
cumulative code improvement, which is exactly a difference of
prequential code lengths. The economic role of shut-off is carried
entirely by the switching closure below, where switching is priced in
description length, rather than granted for free through a positive
part inside the assumption.

\begin{definition}[Switching closure and switching gain]
Let $\overline{\M}_{B}^{\mathrm{sw}}(M_0)$ be the switching closure of $\M_B$ with the null model $M_0$. Its elements are sequential meta-models that, at each time $t$, either use the null forecast or a forecast produced by a model in $\M_B$, possibly composed with a fixed admissible induced-forecast map $\iota_t$ of a position rule, with the composition, switching rule, thresholds, gates, and switch times all charged in the description length, and with the total procedure remaining within budget $B$. Define
\[
\FMDL_B^{\mathrm{sw}}(D)=
\inf_{S\in\overline{\M}_{B}^{\mathrm{sw}}(M_0)}
\left\{L_R(S)+\sum_{t=1}^{T-h}\ell_S(Y_{t+h}\mid X_t^R)\right\},
\]
and
\[
\Gain_B^{\mathrm{sw}}(D;M_0)=\MDLzero(D)-\FMDL_B^{\mathrm{sw}}(D).
\]
The conditional switching gain is defined analogously by using the $X$-only switching-closed code as benchmark and the $(X,Z)$ switching-closed code as the augmented code:
\[
\mathcal A_B^{\mathrm{sw}}(Y;Z\mid X)=
\left[\FMDL_B^{\mathrm{sw}}(Y\mid X)-\FMDL_B^{\mathrm{sw}}(Y\mid X,Z)\right]_+ .
\]
An implemented trading-and-switching policy determines induced
forecasts $\{\widehat Q_t\}_{t=1}^{T-h}$ as in
\cref{ass:monetize}. Because the selection rule between
the null forecast and within-budget model forecasts is
$\F_t$-measurable and budget-feasible, the trajectory together
with its rule is describable as a meta-model
$S_\pi\in\overline{\M}^{\mathrm{sw}}_{B}(M_0)$ with description length
$L_R(S_\pi)<\infty$. Define the \emph{prequential slack}
\[
\operatorname{Reg}^{\mathrm{sw}}_{B}(T)\;=
\Bigl[\;\FMDL^{\mathrm{sw}}_{B}(D)
-\sum_{t=1}^{T-h}\ell_{\widehat Q_t}(Y_{t+h}\mid X^R_t)\;\Bigr]_{+}
\;\ge\;0,
\]
the amount by which the header-free sequential trajectory outperforms
the best switching-closed \emph{two-part} code. Since
$S_\pi\in\overline{\M}^{\mathrm{sw}}_{B}(M_0)$ gives
$\FMDL^{\mathrm{sw}}_{B}(D)\le
L_R(S_\pi)+\sum_t\ell_{\widehat Q_t}$, the slack satisfies
$\operatorname{Reg}^{\mathrm{sw}}_{B}(T)\le L_R(S_\pi)$: it is bounded
by the description length of the implemented trading-and-switching
program itself. Note the sign convention: excess sequential \emph{loss}
relative to the batch code (learning-phase losses, imperfect switching)
requires no term at all, because it only reduces the trader's realized
improvement.
\end{definition}

\begin{definition}[Decomposed monetization constants]
For lifetime log-growth and capacity, write
\[
\kappa_{\mathrm W}=\kappa_{\mathrm{span}}\kappa_{\mathrm{lev}}\kappa_{\mathrm{impact}}\kappa_{\mathrm{cost}}\kappa_{\mathrm{exec}}\kappa_{\mathrm{surv}},
\qquad 0\le \kappa_j\le 1.
\]
Here leverage and survival limits cap log-growth and capacity. For Sharpe-ratio ceilings, use instead
\[
\kappa_{\mathrm{SR}}=\kappa_{\mathrm{span}}\kappa_{\mathrm{impact}}\kappa_{\mathrm{cost}}\kappa_{\mathrm{exec}},
\]
excluding leverage and survival, because Sharpe is scale-invariant even when log-growth is not. Transaction costs may also be charged directly against the gain by replacing
\[
g \quad\text{with}\quad g^{\mathrm{net}}=\left[g-\frac{C_{\mathrm{tc}}}{(T-h)\ln 2}\right]_+.
\]
The paper states bounds for conservative upper estimates of these constants; tight valuation requires market-specific modeling.
\end{definition}

\begin{theorem}[Lifetime alpha bound, switching-closed form]
\label{thm:lifetime}
Let \cref{ass:monetize} hold, let an investor train
models in $\M_{B}$ on the represented dataset $D^R_{T,h}$ and
trade under prequential accounting \parencite{dawid1984}, and let the
benchmark trade the null model $M_0$. Then the cumulative
conditional-expected excess log-growth satisfies, pathwise,
\[
\sum_{t=1}^{T-h}\Delta W_t \;\le\; \kappa_{\mathrm W}\ln 2\,
\Bigl[\,\Gain^{\mathrm{sw}}_{B}(D^R_{T,h};M_0)
+\operatorname{Reg}^{\mathrm{sw}}_{B}(T)\,\Bigr]
\;+
\kappa_{\mathrm W}\ln 2\; M_T \qquad\text{nats},
\]
where $M_T$ is a sum of martingale differences with
$\E[M_T]=0$. In particular, the bound holds exactly in expectation; and
if the conditional code-length differences are uniformly bounded by $c$, then
$|M_T|\le c\sqrt{2\,h\,T\log(2h/\delta)}$ with probability at least
$1-\delta$ (for $h>1$ the sum splits into $h$ interleaved martingale
subsequences, whence the factor $\sqrt h$). Moreover
$\operatorname{Reg}^{\mathrm{sw}}_{B}(T)\le L_R(S_\pi)$, the
description length of the implemented trading-and-switching program,
which is $\OO(\log T)$ for fixed-dimensional plug-in strategies.
Equivalently, for proprietary information $Z$ over public information
$X$ and target $Y$,
\[
\sum_{t}\Delta W_t \;\le\; \kappa_{\mathrm W}\ln 2\,
\Bigl[\,\mathcal A^{\mathrm{sw}}_{B}(Y;Z\mid X)
+\operatorname{Reg}^{\mathrm{sw}}_{B}(T)\,\Bigr]
\;+
\kappa_{\mathrm W}\ln 2\; M_T,
\]
with the $X$-only switching-closed code as benchmark. No admissible
bounded strategy with a shut-off option can extract more cumulative
excess log-growth than the switching-closed, target-specific
compression gain, up to the description length of its own switching
program and a mean-zero fluctuation.
\end{theorem}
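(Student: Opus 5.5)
The proof sums the per-period inequality of \cref{ass:monetize} over time, splits the conditional code improvements into realized improvements plus martingale differences, and then bounds the realized cumulative improvement by the switching-closed gain. For each $t$, \eqref{eq:two-sided} gives $\Delta W_t\le\kappa_{\mathrm W}\ln 2\,\Delta\ell_t$ with $\Delta\ell_t=\E[\ell_{Q_0}(Y_{t+h}\mid X^R_t)-\ell_{\widehat Q_t}(Y_{t+h}\mid X^R_t)\mid\F_t]$. Because there is no positive part, these bounds can be added directly:
\[
\sum_{t=1}^{T-h}\Delta W_t\le\kappa_{\mathrm W}\ln 2\sum_{t=1}^{T-h}\Delta\ell_t .
\]
Next I write $\Delta\ell_t=d_t-\xi_t$. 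Here $d_t=\ell_{Q_0}(Y_{t+h}\mid X^R_t)-\ell_{\widehat Q_t}(Y_{t+h}\mid X^R_t)$ is the realized code improvement, and $\xi_t=d_t-\E[d_t\mid\F_t]$. Setting $M_T=-\sum_t\xi_t$, each $\xi_t$ is $\F_{t+h}$-measurable and has zero $\F_t$-conditional mean, so $\E[M_T]=0$.

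The realized sum telescopes into a difference of prequential code lengths: $\sum_t d_t=\sum_t\ell_{Q_0}-\sum_t\ell_{\widehat Q_t}$. I add and subtract $L_R(M_0)$ and $\FMDL^{\mathrm{sw}}_B(D)$:
\[
\sum_t d_t=\MDLzero(D)-\FMDL^{\mathrm{sw}}_B(D)+\Bigl[\FMDL^{\mathrm{sw}}_B(D)-\sum_t\ell_{\widehat Q_t}\Bigr]-L_R(M_0).
\]
The first difference is $\Gain^{\mathrm{sw}}_B$. The bracket is at most its positive part, which is $\operatorname{Reg}^{\mathrm{sw}}_B(T)$. The term $-L_R(M_0)$ is nonpositive and can be dropped. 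This gives the pathwise inequality. The companion bound $\operatorname{Reg}^{\mathrm{sw}}_B(T)\le L_R(S_\pi)$ follows from membership $S_\pi\in\overline{\M}^{\mathrm{sw}}_B(M_0)$, as already noted after the switching definition.

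For the $\OO(\log T)$ claim, a fixed-dimensional plug-in strategy is described by $k$ parameters quantized to precision $T^{-1/2}$, together with a constant-size rule. That costs $\tfrac k2\log T+\OO(1)$ bits.

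The conditional version uses the same argument with a different benchmark. I replace $Q_0$ by the forecasts of the best $X$-only switching-closed code, interpreting \cref{ass:monetize} relative to that benchmark. The telescoped sum then equals $\FMDL^{\mathrm{sw}}_B(Y\mid X)-\sum_t\ell_{\widehat Q_t}$ up to that benchmark's header. I bound this by $\mathcal A^{\mathrm{sw}}_B+\operatorname{Reg}^{\mathrm{sw}}_B$, using the fact that a positive part only enlarges the right side.

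The main obstacle is the concentration step when $h>1$. The differences $\xi_t$ are not adapted to a single filtration indexed by $t$, because $\xi_t$ depends on $Y_{t+h}$ while $\xi_{t+1}$ is $\F_{t+1}$-conditioned. I will split the indices into the $h$ residue classes $t\equiv j\pmod h$. Within each class, $\xi_{j},\xi_{j+h},\dots$ form a martingale difference sequence with respect to $(\F_{j+kh})_k$, and each term is bounded by $2c$ or by $c$, depending on how the uniform bound is read. Azuma--Hoeffding applied to each class, which has about $T/h$ terms, at confidence $\delta/h$, followed by a union bound and summation over the $h$ classes, gives $h\cdot c\sqrt{2(T/h)\log(2h/\delta)}=c\sqrt{2hT\log(2h/\delta)}$. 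I expect to have to check two details carefully: the constant convention for the bound $c$, and measurability of $\widehat Q_t$ with respect to $\F_t$, which is guaranteed by the induced-forecast map.
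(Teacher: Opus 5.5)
Your proposal is correct and follows the paper's proof essentially step for step. You sum the two-sided bound of \cref{ass:monetize} with no positive parts, pass from conditional to realized code improvements through a mean-zero martingale term $M_T$, drop $-L_R(M_0)$ and bound the remaining bracket by $\operatorname{Reg}^{\mathrm{sw}}_B(T)\le L_R(S_\pi)$, apply Azuma--Hoeffding on the $h$ interleaved subsequences, and rerun the chain with the $X$-only switching-closed code as benchmark for the conditional form. Your extra details (the union-bound arithmetic, the $\tfrac k2\log T$ count, and discarding the benchmark header in the conditional case) are sound; on the constant you flagged, the range form of Azuma--Hoeffding gives exactly the stated $c\sqrt{2hT\log(2h/\delta)}$, because when $|d_t|\le c$ each $\xi_t$ lies in an interval of length $2c$ with $\F_t$-measurable endpoints.
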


\begin{proof}
\emph{Step 1 (the policy is a code).} The implemented policy determines
positions $\pi_t$ and induced forecasts $\widehat Q_t=\iota_t(\pi_t)$,
with $\widehat Q_t=Q_{0,t}$ on benchmark periods. By admissibility the
selection rule is $\F_t$-measurable and budget-feasible, so the
trajectory defines a meta-model
$S_\pi\in\overline{\M}^{\mathrm{sw}}_{B}(M_0)$ with
$L_R(S_\pi)<\infty$.

\emph{Step 2 (telescoping without positive parts).} Summing the
two-sided bound \eqref{eq:two-sided} over $t$,
\[
\sum_{t}\Delta W_t \;\le\; \kappa_{\mathrm W}\ln 2
\sum_{t}\E\bigl[\ell_{Q_0}(Y_{t+h}\mid X^R_t)
-\ell_{\widehat Q_t}(Y_{t+h}\mid X^R_t)\,\big|\,\F_t\bigr].
\]
Because the inequality holds with sign in every period, the right side
is the \emph{net} conditional code improvement of the implemented
trajectory over the null: no oracle selection of favorable periods
enters, so no unpriced switching pattern is smuggled into the bound.

\emph{Step 3 (conditional to realized).} Each summand differs from its
realized value by a martingale difference, so
\[
\sum_{t}\E[\,\cdot\mid\F_t]
\;=\;\sum_{t}\bigl[\ell_{Q_0}(Y_{t+h}\mid X^R_t)
-\ell_{\widehat Q_t}(Y_{t+h}\mid X^R_t)\bigr]\;+
M_T,
\]
with $\E[M_T]=0$. Under conditionally bounded code-length
differences, Azuma--Hoeffding applied to each of the $h$ interleaved
martingale subsequences gives the tail stated in the theorem.

\emph{Step 4 (comparison with the batch code).} On the null side,
$\sum_t \ell_{Q_0}(Y_{t+h}\mid X^R_t)
=\MDLzero(D)-L_R(M_0)\le \MDLzero(D)$. On the trajectory
side, the definition of the prequential slack gives
\[
\sum_t\ell_{\widehat Q_t}(Y_{t+h}\mid X^R_t)
\ge \FMDL^{\mathrm{sw}}_{B}(D)
-\operatorname{Reg}^{\mathrm{sw}}_{B}(T),
\]
with $\operatorname{Reg}^{\mathrm{sw}}_{B}(T)\le L_R(S_\pi)$ by
Step~1. Therefore
\[
\sum_t\ell_{Q_0}-\sum_t\ell_{\widehat Q_t}
\;\le\;\MDLzero(D)-\FMDL^{\mathrm{sw}}_{B}(D)
+\operatorname{Reg}^{\mathrm{sw}}_{B}(T)
\;=\;\Gain^{\mathrm{sw}}_{B}(D;M_0)
+\operatorname{Reg}^{\mathrm{sw}}_{B}(T).
\]

\emph{Step 5 (conclusion).} Combining Steps 2--4 yields the displayed
pathwise bound; taking expectations kills $M_T$ and gives the exact
bound in expectation. The conditional form follows by the same chain
with the $X$-only switching-closed code as the null and the $(X,Z)$
switching-closed code as the augmented code.
\end{proof}

\begin{remark}[No-switching special case]
If the implemented policy never leaves its model forecast, then
$S_\pi$ requires no switching bits, $\widehat Q_t$ is the model's own
trajectory, and the same proof bounds cumulative excess log-growth by
$\kappa_{\mathrm W}\ln 2\,\bigl[\,\Gain_{B}(D;M_0)+L_R(M)\,\bigr]$
up to the fluctuation term, with the ordinary closure-free gain. The
switching-closed statement is the economically relevant general form
because real desks can shut models off; the value of that option is now
priced where it belongs --- in the description length of the switching
program --- rather than granted for free through a positive part in the
assumption.
\end{remark}

\begin{remark}[Why the gain and not the raw epiplexity]
\label{rem:gain-not-epi}
\Cref{thm:lifetime} sharpens the claim that epiplexity is not alpha. High $\FEpi^{\mathrm{raw}}_B$ with $\Gain_B^{\mathrm{sw}}\le 0$ monetizes nothing: structure whose description cost exceeds its predictive compression value, or structure already in the null, has no alpha budget. Conversely, $\Gain_B^{\mathrm{sw}}>0$ is necessary but not sufficient for alpha; the structure may concern risk rather than return, or it may be cost-dominated, which means $\kappa_{\mathrm W}\ll 1$. The theorem is therefore best read as a data- and representation-valuation bound: it caps the lifetime P\&L of any bounded strategy built on $(D,R,B)$ before a backtest is run.
\end{remark}

\subsection{The Sharpe ceiling}

In the small-edge regime, the per-period optimal log-growth of a strategy with Sharpe ratio $\SR$ is $\tfrac12\SR^2$ nats by the usual mean-variance expansion at the growth-optimal position. Writing
\[
g=\frac{\Gain_B^{\mathrm{sw}}(D^R_{T,h};M_0)}{T-h}
\]
for the per-period net structural gain in bits gives the following ceiling.

\begin{corollary}[Sharpe ceiling]
\label{cor:sharpe}
Under \cref{thm:lifetime} and the small-edge expansion,
\[
\SR\le \sqrt{2\kappa_{\mathrm{SR}} g\ln 2}
\qquad\text{per effective period},
\qquad
\SR_{\mathrm{ann}}\le \sqrt{2\kappa_{\mathrm{SR}} n_{\mathrm{eff}} g\ln 2},
\]
where $g$ is the per-effective-period switching gain in bits and $n_{\mathrm{eff}}$ is the effective number of independent bets per year. For overlapping targets of horizon $h>1$, $n_{\mathrm{eff}}$ can be far below the raw sampling frequency and should be estimated from the dependence structure of the labels.
\end{corollary}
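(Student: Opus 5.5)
The plan is to combine the lifetime bound of \cref{thm:lifetime}, taken in expectation, with the small-edge relation between Sharpe ratio and optimal log-growth. Normalized per period, \cref{thm:lifetime} says that the average per-period excess log-growth of any admissible policy is at most $\kappa_{\mathrm W}\ln 2\,[g + \operatorname{Reg}^{\mathrm{sw}}_B(T)/(T-h)]$ nats. Because $\operatorname{Reg}^{\mathrm{sw}}_B(T)\le L_R(S_\pi)=\OO(\log T)$ for fixed-dimensional plug-in strategies, the slack contributes only $\OO(\log T/T)$ per period. I will absorb it into $g$, or drop it asymptotically; the corollary is a ceiling "per effective period" in this sense.

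The Sharpe ratio is scale-invariant, so the bound must be applied at the growth-optimal scaling of the strategy. Take a strategy with per-period Sharpe ratio $\SR$ and rescale its position by a leverage factor $\lambda$. In the small-edge expansion, expected excess log-growth is $\lambda\mu-\tfrac12\lambda^2\sigma^2+o(\cdot)$. Maximizing over $\lambda$ gives $\tfrac12\SR^2$ nats. This rescaled policy is again admissible, since rescaling is a fixed $\OO(1)$-bit modification of the position rule. Applying the per-period bound to it gives $\tfrac12\SR^2\le \kappa\, g\ln 2$, which is the displayed per-period inequality after taking square roots.

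The step I expect to be the main obstacle is justifying $\kappa_{\mathrm{SR}}$ rather than $\kappa_{\mathrm W}$ in the inequality. Leverage and survival constraints may forbid the rescaling to the Kelly point, so the unconstrained growth of the rescaled strategy is not what those factors cap. My approach is to re-derive the bound of \cref{thm:lifetime} for the hypothetical unconstrained rescaled strategy, using a monetization constant that omits $\kappa_{\mathrm{lev}}$ and $\kappa_{\mathrm{surv}}$. I will argue that these two factors only truncate achievable scale, and the Sharpe ratio is invariant to scale. That leaves the spanning, impact, cost and execution losses, whose product is $\kappa_{\mathrm{SR}}$. This is a modeling assertion about how the decomposition in the definition of decomposed monetization constants factorizes, and I would state it explicitly as the interpretation being used.

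For the annualized statement, I will aggregate over a year containing $n_{\mathrm{eff}}$ effective independent bets. For approximately independent per-bet returns, mean and variance both scale linearly in $n_{\mathrm{eff}}$, so $\SR_{\mathrm{ann}}=\sqrt{n_{\mathrm{eff}}}\,\SR$. Substituting gives the second inequality. For overlapping horizons $h>1$, consecutive labels share innovations and the raw periods are not independent. This is why $g$ and $\SR$ must be measured per effective period: this matches the $h$ interleaved subsequences in the proof of \cref{thm:lifetime}, and $n_{\mathrm{eff}}$ should be estimated from label dependence rather than from the raw sampling frequency.
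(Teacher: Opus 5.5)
Your proposal is correct and follows the paper's own argument, which the paper gives only as the one-line remark before the corollary: the small-edge expansion gives growth-optimal log-growth $\tfrac12\SR^2$ nats per period, this is capped by the per-period normalization $g=\Gain_B^{\mathrm{sw}}(D^R_{T,h};M_0)/(T-h)$ of \cref{thm:lifetime}, and annualization multiplies the Sharpe ratio by $\sqrt{n_{\mathrm{eff}}}$. You are more careful than the paper on two points: you track the slack $\operatorname{Reg}^{\mathrm{sw}}_B(T)/(T-h)$, which the paper silently drops, and you flag that using $\kappa_{\mathrm{SR}}$ instead of $\kappa_{\mathrm W}$ is a modeling assertion about the unconstrained rescaled strategy, not a consequence of \cref{ass:monetize} (this assertion is needed only when leverage or survival limits make the Kelly scaling inadmissible, since otherwise $\kappa_{\mathrm W}\le\kappa_{\mathrm{SR}}$ gives the bound directly).
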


\begin{remark}[Orders of magnitude]
At $\kappa_{\mathrm{SR}}=1$ and non-overlapping daily bets with $n_{\mathrm{eff}}=252$, an annualized Sharpe of $3$ requires
\[
g\ge \frac{3^2}{2\cdot 252\ln 2}\approx 0.026
\]
bits per day of net compression gain. A backtested Sharpe of $6$ implicitly claims $g\ge 0.10$ bits per day. Elite performance corresponds to hundredths of a bit per period. This yields the \emph{epiplexity consistency check}: estimate $g$ for the strategy's data and representation; reject any backtest whose realized Sharpe exceeds the ceiling. Unlike multiple-testing corrections \parencite{bailey2014,harvey2016}, the check does not need to know how many strategies were tried. It bounds what the data contain, not how the data were searched.
\end{remark}

\subsection{The information coefficient, bounded}

For a single standardized bet with jointly Gaussian forecast and outcome at correlation $\IC$, the forecast--outcome mutual information is
\[
I=-\frac12\log_2(1-\IC^2)
\quad\text{bits},
\qquad
\IC=\sqrt{1-2^{-2I}}
\approx \sqrt{2I\ln 2}
\]
for small $I$.

\begin{corollary}[Sustainable IC and the fundamental law]
\label{cor:ic}
If the per-bet net structural bits available at budget $B$ are $g_{\mathrm{bet}}$, then the sustainable information coefficient obeys
\[
\IC_{\mathrm{sust}}\le \sqrt{1-2^{-2g_{\mathrm{bet}}}},
\]
and Grinold's fundamental law is capped by
\[
\IR\le \sqrt{2\ln 2\cdot N g_{\mathrm{bet}}},
\]
where $N$ is effective breadth \parencite{grinold1989}. Measured $\IC$ in excess of $\IC_{\mathrm{sust}}$ is fitted time-bounded entropy with out-of-sample expectation zero. Backtest overfitting is, in this vocabulary, reporting $\TBE$ as $\FEpi$.
\end{corollary}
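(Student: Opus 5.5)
The plan is to reduce the corollary to the Gaussian mutual-information identity and to the Sharpe ceiling of \cref{cor:sharpe}. I would argue in three steps: (a) bound the sustainable per-bet mutual information between forecast and outcome by $g_{\mathrm{bet}}$; (b) invert the Gaussian identity $I=-\tfrac12\log_2(1-\IC^2)$ to obtain the IC cap; (c) aggregate over $N$ effectively independent bets to cap the information ratio.

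For step (a), I would fix a single standardized bet and take the implementable forecast to be the induced forecast $\widehat Q_t$ from \cref{ass:monetize}. The key observation is that the expected code improvement of the Bayes forecast $P(Y\mid \widehat Y)$ over the unconditional null equals $I(\widehat Y;Y)$. This is the same KL-maximization step used in the memorization proposition: the true conditional law maximizes expected log-loss improvement, and the maximum is a KL divergence whose average over $\widehat Y$ is the mutual information. Averaged over time, the switching-closed gain of \cref{thm:lifetime} caps the realized per-bet code improvement at $g_{\mathrm{bet}}$, with the description length and the $\OO(\log T)$ slack amortized away.

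The delicate step is the direction of this inequality. A forecast with mutual information $I$ does not automatically achieve code improvement $I$. The fix is to observe that a bounded learner which has found $\widehat Y$ can recalibrate it cheaply, by a fixed-dimensional linear--Gaussian rescaling costing $\OO(\log T)$ bits. This recalibrated coder lies in the switching closure, so $\FMDL^{\mathrm{sw}}_B$ is at most its code length. That yields $I(\widehat Y;Y)\le g_{\mathrm{bet}}+o(1)$ per bet.

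For step (b), monotonicity of $I\mapsto\sqrt{1-2^{-2I}}$ then gives $\IC_{\mathrm{sust}}\le\sqrt{1-2^{-2g_{\mathrm{bet}}}}$. Gaussianity enters only through this identity. For non-Gaussian pairs, I would note that the Gaussian case minimizes mutual information at a given correlation, so $I\ge-\tfrac12\log_2(1-\IC^2)$ in general and the cap remains valid.

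For step (c), Grinold's law $\IR\approx\IC\sqrt N$ combined with $\IC^2\le 1-2^{-2g}\le 2g\ln2$ (by $1-e^{-x}\le x$) gives $\IR\le\sqrt{2\ln2\,Ng_{\mathrm{bet}}}$. This agrees with \cref{cor:sharpe} with $n_{\mathrm{eff}}=N$ and $\kappa_{\mathrm{SR}}=1$, since $N g_{\mathrm{bet}}$ plays the role of the total structural bits per year; I would remark on this consistency.

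The final sentence of the corollary is interpretive. I would justify it by noting that any in-sample IC above the cap corresponds to code improvement beyond the switching-closed gain. By the definition of $\FMDL^{\mathrm{sw}}_B$, that excess is residual $\TBE$ fitted in-sample, whose validation expectation is zero by the memorization proposition.
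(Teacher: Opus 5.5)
Your Gaussian argument is essentially the paper's route. The paper gives no separate proof; it obtains the corollary by substituting $I\le g_{\mathrm{bet}}$ into the jointly Gaussian identity $\IC=\sqrt{1-2^{-2I}}$ stated just before it, and then into Grinold's $\IR=\IC\sqrt N$ via $\IC\approx\sqrt{2I\ln2}$.

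Two of your additions improve on that:
\begin{itemize}
\item Your recalibrated coder gives an actual argument for $I(\widehat Y;Y)\le g_{\mathrm{bet}}+o(1)$, which the paper only asserts.
\item The inequality $1-2^{-2g}\le 2g\ln2$ makes the IR cap hold for all $g$, not just as a small-$I$ approximation.
\end{itemize}

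Two small caveats on step (a):
\begin{itemize}
\item You must assume the affine--Gaussian recalibration is budget-feasible and that $\M_B$ (or the admissible induced-forecast maps) is closed under it. The switching closure as defined does not contain it automatically.
\item Against a null that is not the true marginal, the Bayes coder gains $I+D(P_Y\|Q_0)\ge I$ rather than $I$. This is the harmless direction.
\end{itemize}

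The step that fails is your non-Gaussian extension in (b): the Gaussian does \emph{not} minimize mutual information at fixed correlation. Take a symmetric binary pair on $\{\pm1\}$ with $\Prob(\widehat Y=Y)=(1+\rho)/2$. Its correlation is $\rho$, yet
\[
I=1-h_2\!\left(\tfrac{1+\rho}{2}\right)=\frac{1}{2\ln2}\sum_{n\ge1}\frac{\rho^{2n}}{n(2n-1)}<\frac{1}{2\ln2}\sum_{n\ge1}\frac{\rho^{2n}}{n}=-\tfrac12\log_2(1-\rho^2)
\]
for $0<|\rho|<1$. At $\rho=0.8$ this is about $0.53$ versus $0.74$ bits.

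It gets worse. With probability $\eps$ set $\widehat Y=Y=\pm\eps^{-1/2}$ with a common random sign, and otherwise draw them as independent signs. The correlation is $1/(2-\eps)\to1/2$, while $I\le h_2(\eps)+\eps\to0$. If $\widehat Y$ is all the learner knows about $Y$, no coder gains more than $I$ bits per bet over the true marginal. So $g_{\mathrm{bet}}$ can be made arbitrarily small while $\IC\approx1/2$, and the cap fails.

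Without a distributional assumption, mutual information does not control correlation. Drop that sentence and keep the corollary, as the paper does, under the jointly Gaussian (small-edge) bridge. Gaussianity is an essential hypothesis of the corollary, not a convenience confined to step (b).
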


\section{Decay, Crowding, and Capacity}
\label{sec:decay-crowding}

The preceding bounds are static. This section adds dynamics: what happens to the gain $\Gain_B$ or $\Gain_B^{\mathrm{sw}}$ as the market's collective budget grows, as other participants learn the same bits, and as trading itself reveals them.

\subsection{Decay as budget growth}

Let $\tau_{\mathrm{mkt}}(t)$ denote the effective budget of the marginal market participant, including hardware, data diffusion, methodology, and talent. Suppose it grows at rate $\gamma$. A private representation-plus-model discovered at budget $B$ commands the gain band between $\tau_{\mathrm{mkt}}(t)$ and $B$; write $g(t)$ for the per-period gain remaining above the market budget.

\begin{proposition}[Decay law]
\label{prop:decay}
Suppose the gain remaining above market budget $\tau$ has an exponential tail:
\[
g(\tau)=g(\tau_0)\,e^{-\lambda(\tau-\tau_0)},\qquad \tau\ge\tau_0,
\]
where $\lambda>0$ is the \emph{shallowness} of the structure. Large $\lambda$ means the remaining gain is concentrated just above the current market budget and is exhausted quickly as the market budget rises; small $\lambda$ means the gain is spread deeply across budgets. Then, with $\tau_{\mathrm{mkt}}(t)=\tau_0+\gamma t$,
\[
g(t)=g(0)e^{-\lambda\gamma t},
\qquad
t_{1/2}=\frac{\ln 2}{\lambda\gamma},
\]
and by \cref{cor:sharpe} the Sharpe ceiling decays as $e^{-\lambda\gamma t/2}$.
\end{proposition}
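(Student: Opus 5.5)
The plan is direct substitution followed by two short consequences. The proposition takes as given that the gain remaining above the market budget is $g(\tau)=g(\tau_0)e^{-\lambda(\tau-\tau_0)}$, and that the market budget moves along $\tau_{\mathrm{mkt}}(t)=\tau_0+\gamma t$. The first step is to make explicit that $g(t)$ is defined as $g(\tau_{\mathrm{mkt}}(t))$. This is exactly the gain band between $\tau_{\mathrm{mkt}}(t)$ and $B$ described before the statement. With this reading, $g(0)=g(\tau_0)$, and substituting gives
\[
g(t)=g(\tau_0)e^{-\lambda\gamma t}=g(0)e^{-\lambda\gamma t}.
\]
We need $\tau_{\mathrm{mkt}}(t)\ge\tau_0$ for $t\ge0$ so that we stay in the domain of the tail hypothesis. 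This holds because $\gamma\ge0$.

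The half-life comes next. We solve $g(t_{1/2})=\tfrac12 g(0)$, which gives $e^{-\lambda\gamma t_{1/2}}=\tfrac12$ and hence $t_{1/2}=\ln 2/(\lambda\gamma)$. This requires $\lambda\gamma>0$: $\lambda>0$ is assumed, and $\gamma>0$ must be imposed for a finite half-life. I would note in passing that $\gamma=0$ gives no decay.

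For the Sharpe statement I would invoke \cref{cor:sharpe} pointwise in time. The key step is to treat $g(t)$ as the per-effective-period switching gain that enters the corollary at date $t$. Then
\[
\SR(t)\le\sqrt{2\kappa_{\mathrm{SR}}\,g(t)\ln 2}=\sqrt{2\kappa_{\mathrm{SR}}\,g(0)\ln 2}\;e^{-\lambda\gamma t/2},
\]
so the ceiling is its time-zero value times $e^{-\lambda\gamma t/2}$. The annualized version carries the extra factor $\sqrt{n_{\mathrm{eff}}}$ and decays identically, holding $\kappa_{\mathrm{SR}}$ and $n_{\mathrm{eff}}$ fixed.

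The main obstacle is the modeling step. \Cref{cor:sharpe} is stated for a sample-average gain over a fixed dataset, not for a time-local gain. To apply it with $g(t)$, I would appeal to the rolling gain $\Gain_{B,t}(w;M_0)$ from the time-varying epiplexity subsection, normalized per period. The assumption needed is that over a window short relative to $1/(\lambda\gamma)$ the gain is approximately constant, so the corollary applies window by window. I would state this local-stationarity reading explicitly rather than attempt to prove it, since the proposition is essentially a consequence of its exponential-tail hypothesis.
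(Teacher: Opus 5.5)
Your proposal is correct and is essentially the paper's argument: substitute $\tau_{\mathrm{mkt}}(t)=\tau_0+\gamma t$ into the exponential tail, then use that the Sharpe ceiling of \cref{cor:sharpe} scales as $\sqrt{g(t)}$; your explicit half-life computation and local-stationarity caveat spell out steps the paper leaves implicit. The one point the paper adds is a finite-budget caveat: since you identify $g(t)$ with the band up to the private budget $B$, for finite $B$ the exponential law holds only for $t\le (B-\tau_0)/\gamma$, with $g(t)=0$ afterward.
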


\begin{proof}
The remaining gain at time $t$ is the tail of the profile above the market budget. Substituting $\tau_{\mathrm{mkt}}(t)=\tau_0+\gamma t$ into the exponential tail gives $g(t)=g(\tau_0)e^{-\lambda\gamma t}$ directly. The Sharpe ceiling is proportional to $\sqrt{g(t)}$, so its log-decay rate is half the bit-rate decay. If the private budget $B$ is finite, the same substitution gives $g(t)=g(0)e^{-\lambda\gamma t}$ exactly for $t\le (B-\tau_0)/\gamma$ and $g(t)=0$ afterward; the exponential law is the interior regime.
\end{proof}

\begin{remark}[Depth, not strength]
The half-life contains the signal's statistical strength $g(0)$ nowhere. Decay speed is governed by the shape of the gain profile, namely $\lambda$, and by the growth rate of the population's budget, namely $\gamma$. A modest but computationally deep signal can outlive a spectacular shallow signal. Publication is a one-time jump in $\tau_{\mathrm{mkt}}$ restricted to the bits the publication transmits - the certificate, not necessarily the surrounding infrastructure - which helps explain why some anomalies decay only partially after publication \parencite{mclean2016}.
\end{remark}

\begin{conjecture}[Cross-sectional decay]
\label{conj:decay}
Signals sharing a computation class decay together when that class commoditizes. In documented anomaly panels, post-publication decay speed should load on computational shallowness, proxied for example by recoverability through commodity machine-learning models from public data, after controlling for the publication effect.
\end{conjecture}

\subsection{Crowding as mutual compressibility}

Let two participants produce forecast streams
\[
\mathcal Q_i=\bigl(p_{M_i}(\cdot\mid X_t)\bigr)_{t=1}^{T-h},\qquad i=1,2,
\]
quantized to a fixed precision $\eta$ and evaluated on the same target and information set. Define their \emph{forecast-stream mutual compressibility} by
\[
I_{12}^{\mathrm{fore}}=
L_\eta(\mathcal Q_1)+L_\eta(\mathcal Q_2)-L_\eta(\mathcal Q_1,\mathcal Q_2),
\]
where $L_\eta$ is a fixed prequential or compression code for the forecast paths. This is an extensional definition: two different implementations that compute the same predictive kernels have high overlap even if their parameter files have unrelated syntax.

\begin{proposition}[Crowding bound under a shared-bit factorization]
\label{prop:crowding}
Assume a small-edge Gaussian factorization of the two null-hedged forecast payoffs. Let the target-relevant information rates of the two forecast streams be $g_1$ and $g_2$ bits per period, and let their shared forecast code carry at most
\[
b_{12}=I_{12}^{\mathrm{fore}}/(T-h)
\]
bits per period of target-relevant information. Suppose that, after hedging the null, the only systematic covariance between the two active returns is generated by the shared forecast component, while private components are conditionally orthogonal. Then
\[
\rho^2
\le
\frac{\min(b_{12},g_1,g_2)}{\sqrt{g_1g_2}}
\]
in the canonical Gaussian normalization; with market incompleteness, quantization, and scaling conventions the same inequality holds up to the corresponding universal calibration constant. Thus active-return correlation is generated only by shared target-relevant forecast bits, in proportion to the shared fraction of each book's structural bit budget.
\end{proposition}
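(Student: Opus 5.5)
Within the canonical Gaussian normalization, the plan is to write each null-hedged active return as a linear functional of a Gaussian forecast signal and to reduce correlation to mutual information via the identity $I=-\tfrac12\log_2(1-r^2)$ already used for \cref{cor:ic}. Concretely, decompose book $i$'s forecast signal as $s_i = c_i + e_i$, where $c_i$ is the projection onto the shared forecast component and $e_i$ is the private remainder. By hypothesis $e_1,e_2$ are conditionally orthogonal to each other and to the shared part. Then $\Cov(a_1,a_2)$ for the active returns $a_i$ (proportional to $s_i$ at the growth-optimal small-edge scaling) comes only from $\Cov(c_1,c_2)$. Normalizing gives
\[
\rho=\frac{\Cov(c_1,c_2)}{\sqrt{\Var s_1\Var s_2}}\le \frac{\sqrt{\Var c_1}}{\sqrt{\Var s_1}}\cdot\frac{\sqrt{\Var c_2}}{\sqrt{\Var s_2}}
\]
by Cauchy--Schwarz. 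Writing $\phi_i=\Var c_i/\Var s_i$ for the shared variance fraction of book $i$, we get $\rho^2\le \phi_1\phi_2$.

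Next, convert variance fractions into bit fractions. In the small-edge regime, the target-relevant information rate of a Gaussian signal with correlation $r_i$ to the target is $g_i\approx r_i^2/(2\ln 2)$. This rate is additive across orthogonal components, so the shared component of book $i$ carries $g_i^{c}\approx \phi_i g_i$ bits per period. The shared information is bounded by what the shared code can transmit, $g_i^{c}\le b_{12}$, and trivially by $g_i^c\le g_i$. Hence $\phi_i\le \min(b_{12},g_i)/g_i$. If we insert this directly, we obtain $\rho^2\le \min(b_{12},g_1)\min(b_{12},g_2)/(g_1g_2)$. To reach the stated symmetric form, we use $\min(b_{12},g_1)\min(b_{12},g_2)\le \min(b_{12},g_1,g_2)\cdot \sqrt{g_1g_2}$. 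This follows from bounding one factor by $\min(b_{12},g_1,g_2)$ and the other by $\max$-type bounds. The cleanest route is to bound the product by $m\cdot\max(\cdot)$ and then use that the larger factor is at most $\sqrt{g_1g_2}$ whenever $b_{12}\le\sqrt{g_1g_2}$; the complementary case is handled by $\rho^2\le1$.

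The last paragraph concerns extensionality. We need that $b_{12}=I^{\mathrm{fore}}_{12}/(T-h)$ upper bounds the per-period target-relevant shared information. The plan is to argue by a data-processing step. The target-relevant bits in $c_i$ are a function of the quantized forecast paths, so the mutual information they share cannot exceed the algorithmic mutual information of the forecast streams, up to the precision $\eta$ and a code-dependent additive constant. These constants, together with incompleteness and scaling, are absorbed into the "universal calibration constant" clause.

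I expect the main obstacle to be this step: identifying the shared covariance component $c_i$ with bits counted by a compression-based quantity. This identification is really where the Gaussian factorization assumption does the work. I would try first to state it as an explicit modelling assumption, namely that the shared code is jointly Gaussian with the target, rather than proving it.
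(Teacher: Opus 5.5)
Your skeleton matches the paper's: split each active return into a shared and a private part, apply Cauchy--Schwarz to the shared covariance, and convert variance fractions into bit fractions through the Gaussian identity. The final reduction to the symmetric form, however, does not go through. You bound each book's shared bits only by $g_i^c\le\min(b_{12},g_i)$. That yields $\rho^2\le\min(b_{12},g_1)\min(b_{12},g_2)/(g_1g_2)$.

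Your treatment of the case $b_{12}\le\sqrt{g_1g_2}$ is correct. The complementary case is \emph{not} ``handled by $\rho^2\le 1$''. There $\min(b_{12},g_1,g_2)=\min(g_1,g_2)$, so the target right-hand side equals $\sqrt{\min(g_1,g_2)/\max(g_1,g_2)}$, which is strictly below $1$ whenever $g_1\neq g_2$. Take $g_1=1$, $g_2=4$, $b_{12}=3$. The statement requires $\rho^2\le 1/2$, your intermediate bound gives only $3/4$, and $\rho^2\le 1$ gives nothing. So the regime $b_{12}>\sqrt{g_1g_2}$ with $g_1\neq g_2$ is left unproved, and your per-book bound is genuinely too weak to reach it.

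The missing idea is a cross-bound. The shared forecast component is a single object, common to both forecast streams; you even define $c_i$ as a projection onto it. By data processing, its target-relevant content is therefore bounded by the information of \emph{each} stream, not only its own. Writing $m=\min(b_{12},g_1,g_2)$, this gives $g_i^c\le m$ for both $i=1,2$.

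This is what underwrites the paper's step $\Var(A_1^c)\Var(A_2^c)\le(2\ln 2)^2\min(b_{12},g_1,g_2)^2$. The paper writes the per-book bound as $\min(b_{12},g_i)$, but the product bound it asserts needs exactly this common-component argument. With it, $\phi_i\le m/g_i$, hence
\[
\rho^2\le\phi_1\phi_2\le \frac{m^2}{g_1g_2}\le\frac{m}{\sqrt{g_1g_2}},
\]
where the last step uses $m\le\min(g_1,g_2)\le\sqrt{g_1g_2}$. No case split is needed, and you actually obtain the stronger $|\rho|\le m/\sqrt{g_1g_2}$.

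Your closing paragraph on extensionality is unnecessary. The statement takes ``the shared code carries at most $b_{12}$ target-relevant bits per period'' as a hypothesis, and the paper's proof treats it the same way.
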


\begin{proof}
Let $A_i$ denote the null-hedged active-return component induced by forecast stream $i$ in the small-edge Gaussian approximation. Write the orthogonal decomposition
\[
A_i=A_i^{c}+A_i^{p},\qquad i=1,2,
\]
where $A_i^c$ is measurable with respect to the shared forecast code and $A_i^p$ is private. By assumption, $\Cov(A_1^p,A_2^p)=\Cov(A_1^c,A_2^p)=\Cov(A_1^p,A_2^c)=0$, so the covariance of active returns is $\Cov(A_1^c,A_2^c)$. The Gaussian bit-correlation conversion used in \cref{cor:ic} identifies the variance budget of a small-edge forecast component with its target-relevant information rate, up to the common factor $2\ln2$. Hence
\[
\Var(A_i)\asymp 2\ln2\, g_i,\qquad
\Var(A_i^c)\le 2\ln2\,\min(b_{12},g_i).
\]
Cauchy--Schwarz gives
\[
\Cov(A_1,A_2)^2=\Cov(A_1^c,A_2^c)^2
\le \Var(A_1^c)\Var(A_2^c)
\le (2\ln2)^2\min(b_{12},g_1,g_2)^2.
\]
Dividing by $\Var(A_1)\Var(A_2)\asymp (2\ln2)^2g_1g_2$ yields the displayed bound. The extensional definition of $I_{12}^{\mathrm{fore}}$ avoids syntax dependence: implementations that compute the same predictive kernels share forecast bits even if their parameter files are unrelated.
\end{proof}

\begin{remark}[The useful converse]
Observed residual correlation lower-bounds the shared forecast bits:
\[
I_{12}^{\mathrm{fore}}\gtrsim (T-h)\rho^2\sqrt{g_1g_2}.
\]
This gives a crowding monitor requiring no access to competitors' model files. A crowding event is the synchronized forced revelation of shared bits into prices. Deleveraging by one holder impairs another holder's marks because the forecast bits are the same; crowding risk is therefore a property of mutual compressibility, not merely gross exposure. A practical estimator can use normalized compression distance on forecast streams rather than on parameter files \parencite{cilibrasi2005}.
\end{remark}

\subsection{Capacity as a bit-leak budget}

Trading on a model leaks its bits into prices through impact. The benchmark is Kyle's model, in which the informed trader's private information enters prices over the trading horizon through the market maker's inference from order flow \parencite{kyle1985,back1992}.

\begin{proposition}[Capacity bound]
\label{prop:capacity}
Let a signal command $\Gamma=\Gain_B^{\mathrm{sw}}$ private bits and let trading at participation rate $\pi_t$ leak bits into prices at rate $\ell(\pi_t)$, with $\ell$ increasing, convex, and $\ell(0)=0$. Then lifetime extractable excess log-growth obeys
\[
\text{lifetime alpha}
\le
\kappa_{\mathrm W}\ln 2\int_0^{T^*}\bigl[g(t)-\ell(\pi_t)\bigr]^+\,dt
\le
\kappa_{\mathrm W}\ln 2\cdot \Gamma,
\]
where $T^*$ is the endogenous exhaustion time. For a fixed total participation $Q=\int_0^T\pi_tdt$ over a fixed horizon $T$, the cumulative leak is minimized by constant participation $\pi_t=Q/T$; concentrated trading weakly reduces lifetime extraction.
\end{proposition}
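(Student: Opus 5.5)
The plan has three parts, matching the three claims in the statement: the first inequality, the second inequality, and the claim about how to schedule trading.

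\emph{First inequality.} I will pass the per-period bound of \cref{ass:monetize} to continuous time and use it together with the leak accounting. Let $g(t)$ be the per-period private gain still held by the signal at time $t$. The structural bits the trader can convert into excess log-growth at time $t$ are the gain rate minus the bits that leak into prices through impact, $g(t)-\ell(\pi_t)$. If the leak exceeds the gain, the switching closure lets the policy fall back to the null forecast. That fallback gives zero excess growth, which explains the positive part. Concretely, I will take the continuous-time analogue of Steps~2--4 in the proof of \cref{thm:lifetime}. Conditional excess growth is bounded by $\kappa_{\mathrm W}\ln 2$ times the net conditional code improvement of the induced forecast. I will treat impact as a cost that reduces the code improvement the induced forecast retains relative to the updated market price. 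The assumption needed here is that once bits are leaked into the price, the benchmark forecast (the price) contains them. I will state this explicitly as the modelling content of $\ell$. Integrating up to $T^*$, the first time the remaining private stock is used up, gives the first inequality. After $T^*$ the integrand is zero.

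\emph{Second inequality.} This is a stock--flow accounting step. The private stock $\Gamma(t)$ is depleted by both extraction and leakage: $\dot\Gamma(t)=-[\text{extracted rate}]-\ell(\pi_t)$ with $\Gamma(0)=\Gamma=\Gain_B^{\mathrm{sw}}$. Then $\int_0^{T^*}[g(t)-\ell(\pi_t)]^+dt\le\int_0^{T^*}g(t)\,dt$. I will then identify the total bits ever available, $\int g$, with at most $\Gamma$, because \cref{thm:lifetime} caps total extractable code improvement by the switching-closed gain. The $\operatorname{Reg}^{\mathrm{sw}}$ term is absorbed into this, or carried as a lower-order term. I expect this identification to be the main obstacle. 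The statement uses $g(t)$ as a rate, $\Gamma$ as a stock, and the relation $\int_0^{T^*} g\le\Gamma$ is not proven earlier in the paper. My plan is to define $g(t)$ as the instantaneous rate at which the switching-closed code improvement accrues. Under that definition the inequality is the telescoping identity of Step~2, and I will record it as a definitional convention. Because $\ell\ge0$, the positive part is bounded by $g(t)$.

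\emph{Scheduling claim.} This is a Jensen argument. For fixed $Q=\int_0^T\pi_t\,dt$, convexity of $\ell$ gives
\[
\frac1T\int_0^T\ell(\pi_t)\,dt\ \ge\ \ell\!\left(\frac1T\int_0^T\pi_t\,dt\right)=\ell(Q/T),
\]
so constant participation $\pi_t\equiv Q/T$ minimizes the cumulative leak. To get ``concentrated trading weakly reduces lifetime extraction,'' I will note that $x\mapsto[g-x]^+$ is nonincreasing. Hence increasing the leak pointwise, or in integrated form, cannot raise the first bound. This comparison is exact only when $g(t)\ge\ell(\pi_t)$ on the relevant set, where the positive part is inactive and the integral is linear in $\ell$. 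Otherwise I will state the monotonicity only for the cumulative-leak bound itself, which is exactly what the weak claim requires.
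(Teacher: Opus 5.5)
Your proposal follows the paper's own argument. The first inequality is the lifetime bound of \cref{thm:lifetime} applied to the not-yet-revealed bits, with leaked bits absorbed into the public price code. The second is the stock accounting that cumulative private-bit expenditure cannot exceed $\Gamma$, which the paper, like you, asserts rather than derives. The pacing claim is Jensen's inequality applied to $\int_0^T\ell(\pi_t)\,dt$.

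Your caveat about the positive part is well placed. Because $[g-\ell]^+$ clips at zero, concentrated trading can actually raise $\int[g(t)-\ell(\pi_t)]^+dt$. The paper's proof likewise establishes the scheduling claim only for cumulative leak, not for lifetime extraction.
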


\begin{proof}
The outer inequality is the lifetime alpha bound applied to the remaining, not-yet-revealed bits. Leakage subtracts from the private bit stock because leaked bits become part of the public price code. The final inequality follows because cumulative private bit expenditure cannot exceed the initial stock $\Gamma$. For the pacing claim, Jensen's inequality gives
\[
\int_0^T \ell(\pi_t)dt\ge T\ell\!\left(\frac{1}{T}\int_0^T\pi_tdt\right)=T\ell(Q/T),
\]
with strict inequality for nonconstant $\pi_t$ when $\ell$ is strictly convex. Thus, holding total intended participation fixed, smoother trading minimizes information leakage.
\end{proof}

Capacity, decay, and leak are three attacks on the same bits by three observers: the crowd's growing budget, competitors' models, and the market maker's inference from flow. Computational depth is the moat against all three because the inverse problem is the same.

\section{Ordering and One-Way Structure in Market Data}
\label{sec:ordering}

The representation-dependence result above shows that destroying temporal order can destroy epiplexity. The source framework suggests something sharper: under one-way transformations - cheap to evaluate, expensive to invert - time-bounded information can violate the classical symmetry of information. The two factorization orders of the same joint data can carry different accessible structure, not merely different learnability. In the epiplexity experiments, predicting moves from boards in chess can yield higher epiplexity and better out-of-distribution downstream performance than predicting boards from moves, despite higher training loss \parencite{finzi2026epiplexity}.

\subsection{The flow--price map is one-way}

Microstructure has exactly this shape. Given the full order flow, reconstructing the book and the price path is mechanical: a matching engine is a fast deterministic program. Given prices and book states, recovering the latent flow and the population of intents behind it is the inversion of a many-to-one map that participants actively pay to obfuscate through order splitting, randomization, venue fragmentation, and hidden liquidity. Operationally, the map is one-way.

\begin{proposition}[Inferential-ordering principle]
\label{prop:ordering}
When constructing training tasks from financial data for downstream transfer, inferential orderings should be preferred when the downstream task requires latent mechanism recovery. Examples include book state $\to$ flow or intent, prices $\to$ latent regime, realized path $\to$ generating scenario, and portfolio returns $\to$ hidden holdings. The inferential direction is the high-epiplexity direction; its higher training loss is the price of transfer.
\end{proposition}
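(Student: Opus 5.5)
The principle is normative, so I will prove the formal statement behind it rather than the advice itself. The target claim is this. Let a one-way map $F$ send a latent variable $U$ (flow, intent, scenario, holdings) to an observable $V=F(U)$ (book, prices, realized path, portfolio returns). Suppose $F$ is computable within a small budget, but every predictor in $\M_B$ that approximates $U$ from $V$ must store a nontrivial inverse program. The downstream target $Y$ is a function of $U$. Then the inferential task $V\to U$ has strictly larger net gain $\Gain_B$ than the generative task $U\to V$. It also has larger transferable gain for $Y$, in the sense of the conditional epiplexity $\mathcal A_B$. Its training code length $\TBE_B^{\mathrm{fin}}$ is weakly higher. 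I would set this up in a stylized model, in the same spirit as the separation results of the basic-results section, and not for real order books.

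\emph{Generative direction.} Since $F$ is cheap and deterministic, the optimal code for $V\mid U$ is $M_F$ with $L_R(M_F)=L(F)=\OO(1)$. Its residual bits are essentially zero. So $\FEpi^{\mathrm{raw}}_B$ in this direction is bounded by the length of the matching-engine program, and no representation of $U$ beyond $F$ itself is ever stored. For transfer, I would condition on the model learned in this direction. By the regime value criterion of the basic-results section, adding $M_F$ to the $Y$-task does not lower the code for $Y$ unless the coder also inverts $F$. Inverting $F$ is above budget by the one-way hypothesis. So the transferable part of this direction is $\OO(1)$ bits.

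\emph{Inferential direction.} Here I would show that the $\eps_T$-optimal model must contain a partial inverse $G$ with $L_R(G)\ge k$ for some $k\gg 1$. The reason is that any model in $\M_B$ that reduces residual bits on $U\mid V$ by $\Omega(T)$ must itself be such a partial inverse. Two facts then give the comparison. First, the residual code $H(U\mid V)$ stays positive because $F$ is many-to-one, so the training loss is higher. Second, $G$ composed with the target map gives a code for $Y\mid V$ of cost $L(G)+\OO(1)$ that saves $\Omega(T)$ bits once $T$ exceeds a threshold. Applying the regime value criterion to this composition shows that it has positive net gain. That yields $\mathcal A_B(Y;\text{learned }G\mid V)>0$ versus $\OO(1)$ in the generative direction.

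\emph{Main obstacle.} The hard step is the lower bound on the model bits in the inferential direction. It requires showing that no short in-budget code achieves the compression without encoding the inverse, which is a hardness assumption and not a theorem. I would not attempt it unconditionally. Following the nonmonotonicity proposition, I would state it as an explicit hypothesis: a quantized one-way map for which local inverse motifs are detectable at budget $B$ but global inversion is not. That gives the separation conditionally, with the cryptographic case excluded as before. The word ``should'' in the proposition would then be read as: \emph{under that hypothesis}, choosing the inferential ordering weakly maximizes transferable gain.
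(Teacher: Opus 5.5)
The paper gives no proof of this proposition. It states it as a design rule, motivated by the one-wayness of the flow--price map and by analogy with the chess experiment of \textcite{finzi2026epiplexity}. Your conditional formalization, with hardness as an explicit hypothesis as in the nonmonotonicity proof, therefore goes beyond the paper. Your many-to-one residual argument also matches the paper's informal reason that higher loss reflects ``useful epistemic work''. Two of your formal claims, however, fail as written.

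\textbf{The net-gain claim is reversed.} Take the natural null in each direction, an iid marginal code for the target. The generative code $M_F$ has $\OO(1)$ model bits and essentially zero residual, so its net gain is about $T H(V)$. In the inferential direction, any prefix code for $U$ given $V$, model bits included, has expected length at least $T H(U\mid V)$. Because $V=F(U)$, the inferential net gain is therefore at most $T\,[H(U)-H(U\mid V)]+\OO(1)=T H(V)+\OO(1)$. Bounded computation and the $k$ model bits only lower it further. Symmetry of information thus caps inferential net gain at generative net gain; it cannot exceed it. Your generative paragraph never computes $\Gain_B$, so this claim is also unargued. What you actually establish is the raw separation $\FEpi^{\mathrm{raw}}_B\ge k$ versus $\OO(1)$, together with higher $\TBE^{\mathrm{fin}}_B$. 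That is what ``high-epiplexity direction'' means in the paper, and the lower net gain is part of the ``price''.

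\textbf{The transfer step uses the wrong baseline.} The regime value criterion only shows that $f\circ G$ beats the null $M_0$ on $Y\mid V$. But $\mathcal A_B(Y;G\mid V)$ is measured against $\FMDL_B(Y\mid V)$, the best within-budget $V$-only code. You assume $G\in\M_B$, and composing with $f$ costs $\OO(1)$, so that code can hardcode $f\circ G$ itself. Hence $\FMDL_B(Y\mid V)\le \FMDL_B(Y\mid V,G)+L(G)+\OO(1)$, and handing over $G$ saves at most $L(G)+\OO(1)$ bits, not the $\Omega(T)$ saving over the null. The same hardcoding argument gives $\mathcal A_B(Y;M_F\mid V)\le L(F)+\OO(1)$ directly, with no appeal to inversion hardness. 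The correct separation is therefore of order $k$ versus $\OO(1)$: transferable value is bounded by the raw epiplexity of the pretraining direction.

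\textbf{A missing hypothesis for the lower bound.} Even the bound $\mathcal A_B(Y;G\mid V)\gtrsim k$ does not follow from ``$Y$ is a function of $U$''. A coarse $f$ may be predictable from $V$ by a short code that never approximates $U$. You need to assume, for the downstream task itself, that every $\eps_T$-optimal within-budget code for $Y\mid V$ carries $\gtrsim k$ bits of partial inverse. That is the formal content of ``requires latent mechanism recovery''.

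\textbf{Accounting for $G$.} State explicitly that $G$ enters the $Y$-task as fixed side information whose length is not recharged through $L(R)$. The paper defines $\mathcal A_B$ for data $Z$, and charging $L(G)$ as a searched representation would cancel exactly the transfer value you want to measure.
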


This contradicts the common instinct to train only on the easier causal or generative direction. Ease often means the structure is shallow; shallow structure is easier to learn, easier to commoditize, and less persistent.

A theory-only design rule follows. Given two factorizations of the same market record, $p(A\mid B)$ and $p(B\mid A)$, prefer the factorization whose prediction requires latent-state recovery when the downstream task is risk, toxicity, adverse selection, or regime inference. The forward matching-engine direction is valuable for simulation; the inverse direction is valuable for transferable mechanism learning. The point is not that harder tasks are always better, but that higher loss can signal useful epistemic work rather than failure when the inverse map is many-to-one.

\section{Strategic Interaction: The Game Theory of Alpha Extraction}
\label{sec:games}

The monetization bounds are single-agent statements: they cap what one bounded learner can extract from data against a passive market. In reality the epiplexity of market data is a \emph{contested, congestible, partially rival resource}. The same bits are visible to every participant whose budget reaches them; extraction by one reveals bits to others through prices; participants strategically manufacture computational depth to protect their bits; and the population's aggregate compute --- which sets the decay clock of \cref{prop:decay} --- is itself an equilibrium object. Information theory says what the pie is; game theory says who eats it, how fast it shrinks while being eaten, and why a predictable share ends up with neither trader. This section develops the strategic layer as five nested games, each anchored to a classical benchmark. Throughout, the market data process has an \emph{epiplexity profile}: a nondecreasing map $\tau \mapsto \Gain_\tau$ giving the net structural bits accessible at budget $\tau$. Its complement, the \emph{remaining gain} $g(\tau)=\Gain_\infty-\Gain_\tau$ (or $\Gain_B-\Gain_\tau$ for a private budget $B$), is nonincreasing in $\tau$; the local shallowness of \cref{prop:decay} is $\lambda=-\partial_\tau\log g(\tau)$, the exponential thinning rate of the remaining gain.

\subsection{The budget game: Grossman--Stiglitz as Nash equilibrium}
\label{subsec:budgetgame}

Participants $i = 1,\dots,N$ simultaneously choose budgets $\tau_i \ge \tau_0$ (the free public budget) at increasing convex cost $c(\tau_i)$. Bits accessible to $k$ participants are worth $v(k)\,\kappa_{\mathrm W}\ln 2$ per bit to each holder, with $v$ decreasing and $k\,v(k)$ nonincreasing: duplication does not create value and can destroy it through impact.

\begin{proposition}[Marginal-bit pricing]
\label{prop:gsnash}
Condition on the existence of a pure-strategy Nash equilibrium and suppose the active budgets are interior. Then each active participant's budget satisfies
\[
\kappa_{\mathrm W}\ln 2\cdot v(k_i)\, m(\tau_i) \;=\; c'(\tau_i),
\]
where $m(\tau)=\partial_\tau\Gain_\tau$ is the positive marginal bit density of the accessible profile and $k_i$ the local multiplicity: the marginal bit is priced at the marginal cost of the computation needed to reach it, deflated by expected sharing. Under free entry, participants enter until the value of the band above the public budget equals total expenditure.
\end{proposition}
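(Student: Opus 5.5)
The plan is to write down participant $i$'s payoff explicitly and read the displayed condition off as the first-order condition of a best response at an interior Nash equilibrium. Given the others' budgets $\tau_{-i}$, participant $i$ holds every bit in the accessible profile up to $\tau_i$. The bit at budget level $s$ is shared by $k(s)=\#\{j:\tau_j\ge s\}$ holders, so it is worth $v(k(s))\,\kappa_{\mathrm W}\ln 2$ to each of them. The payoff is then the Stieltjes integral against the profile:
\[
U_i(\tau_i;\tau_{-i})=\kappa_{\mathrm W}\ln 2\int_{\tau_0}^{\tau_i} v\bigl(k(s)\bigr)\,m(s)\,ds\;-\;c(\tau_i),
\]
with $m(s)=\partial_s\Gain_s$. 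The conversion factor $\kappa_{\mathrm W}\ln 2$ per bit is justified by \cref{thm:lifetime}, read as the bit-to-nat conversion of extractable growth. It is taken as the game's primitive value per bit, as the statement's setup prescribes.

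The key step is differentiating $U_i$ in $\tau_i$. The cost side is $c'(\tau_i)$ by convexity and differentiability. On the value side, the fundamental theorem of calculus gives $\kappa_{\mathrm W}\ln 2\,v(k_i)\,m(\tau_i)$, where $k_i$ is the multiplicity just below $\tau_i$, counting $i$ itself. This works only if $s\mapsto v(k(s))m(s)$ is continuous at $\tau_i$. The step function $k(s)$ jumps exactly at the other players' budgets, so I will handle ties separately. If $\tau_i$ coincides with some $\tau_j$, the right and left derivatives of $U_i$ differ, using $k_i$ and $k_i-1$ respectively. Equilibrium then yields only the sandwich $\kappa_{\mathrm W}\ln2\,v(k_i)m(\tau_i)\le c'(\tau_i)\le \kappa_{\mathrm W}\ln2\,v(k_i-1)m(\tau_i)$. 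I will interpret ``local multiplicity'' as the one-sided value that makes equality hold, or I will assume generic (non-tied) interior budgets, which I think is the intended reading of ``interior''. Away from ties, interiority plus optimality forces the derivative to vanish, which gives exactly the displayed equation. Positivity of $m(\tau_i)$ ensures the equation is informative rather than degenerate.

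The free-entry clause I will prove by a zero-profit argument. With entry free at public budget $\tau_0$, a marginal entrant obtains payoff $U$ against incumbents. Entry continues while $U>0$ and stops once it is nonpositive, and integer effects are ignored in the continuum limit. At the limit the entrant's payoff is zero. Summing the payoffs of all active participants, the revenue side adds up $\sum_i\int v(k)m$. Since each bit is held by $k(s)$ participants, this equals $\int_{\tau_0} k(s)v(k(s))\,m(s)\,ds$, the total value of the band above the public budget as realized by its holders. Zero aggregate profit makes this equal to $\sum_i c(\tau_i)$.

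I expect the main obstacle to be the derivative at the jumps of $k(s)$, together with the precise meaning of ``value of the band''. Because $k v(k)$ is nonincreasing, the band's value to holders is below its single-holder value, and I will state explicitly that the identity is with the shared (dissipated) value $\int k v(k) m$. This fits the statement's Grossman--Stiglitz reading that part of the pie goes to nobody.
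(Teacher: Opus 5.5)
Your derivation of the first-order condition is the paper's argument. You take the same band-value integral $\kappa_{\mathrm W}\ln 2\int_{\tau_0}^{\tau_i}v(k(s))\,m(s)\,ds-c(\tau_i)$ and differentiate it at an interior budget that does not sit on a rival's budget, which is the paper's ``locally does not cross a rival's atom.'' Your treatment of ties, however, has the one-sided derivatives swapped. If $\tau_i=\tau_j$, then just above $\tau_i$ the rival $j$ no longer reaches the bits. So the right derivative uses $v(k_i-1)$ and the left derivative uses $v(k_i)$, where $k_i$ is the multiplicity just below $\tau_i$. Since $v(k_i-1)>v(k_i)$, the kink in $U_i$ is convex. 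The local-maximum conditions $\kappa_{\mathrm W}\ln 2\,v(k_i-1)m(\tau_i)\le c'(\tau_i)\le \kappa_{\mathrm W}\ln 2\,v(k_i)m(\tau_i)$ are then incompatible with $m(\tau_i)>0$. Ties therefore cannot occur at interior equilibrium budgets. The equality holds with no genericity assumption and no reinterpretation of ``local multiplicity,'' which is slightly stronger than what the paper states.

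The genuine gap is in the free-entry clause. You pass from ``the marginal entrant's payoff is zero'' to ``zero aggregate profit,'' and that step fails in this game. If $\tau_j>\tau_i$, participant $j$ can deviate to $\tau_i$ without changing any multiplicity on $(\tau_0,\tau_i]$, so it collects exactly $U_i$ (the cost function is common). Equilibrium payoffs are therefore weakly increasing in budget. They are in fact strictly increasing at the top. Interior budgets are distinct, so the highest-budget participant is alone above the second-highest budget $\tau_{(2)}$, whose condition reads $\kappa_{\mathrm W}\ln 2\,v(2)m(\tau_{(2)})=c'(\tau_{(2)})$. Since $v(1)>v(2)$, raising the top participant's budget from $\tau_{(2)}$ strictly increases its payoff, so $U_{(1)}>U_{(2)}$. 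Inframarginal participants keep positive rents, and your aggregate identity $\int k(s)v(k(s))m(s)\,ds=\sum_i c(\tau_i)$ (up to the factor $\kappa_{\mathrm W}\ln 2$) generally fails.

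The paper reads the clause per entrant: the entrant's payoff equals its zero outside option, so the value of the band it acquires equals its own expenditure, which is immediate from the payoff formula. If you want the aggregate version, you need an additional argument. One option is a continuum of atomless identical participants: indifference across the support of the budget distribution equalizes all active payoffs, and free entry then pins the common value at zero. Your remark about ignoring integer effects in the entry count does not supply this.
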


\begin{proof}
For an active participant whose upper budget is interior and locally does not cross a rival's atom, the private value of increasing $\tau_i$ is the marginal value of the newly reached band of structure. The gross value of the interval $(\tau_0,\tau_i]$ is
\[
\kappa_{\mathrm W}\ln2\int_{\tau_0}^{\tau_i} v(k(\tau))m(\tau)\,d\tau,
\]
where $k(\tau)$ is the number of participants who can also reach budget $\tau$. Differentiating with respect to the upper limit gives the marginal benefit $\kappa_{\mathrm W}\ln2\,v(k_i)m(\tau_i)$. Equating this to the marginal cost $c'(\tau_i)$ gives the displayed first-order condition. Under free entry, an entrant's equilibrium payoff must equal its outside option; with a zero outside option this is exactly the zero-profit condition that the value of the acquired band equals total expenditure.
\end{proof}

Because payoff discontinuities can occur when two rivals cross the same budget threshold, the proposition is a local necessary condition rather than an existence theorem. It is \textcite{grossman1980} restated with the correct information measure: prices cannot reveal everything because in equilibrium there must remain bits above the marginal budget whose value covers the cost of reaching them. Two consequences come free. First, the budget game is a \emph{contest} in the sense of \textcite{tullock1980}: expenditures are sunk whether or not bits are captured exclusively, so a structural fraction of the value of market structure is dissipated --- transferred to the suppliers of the contested inputs, namely compute vendors, data vendors, and quantitative labor, with the dissipated fraction increasing in the number of contestants and the substitutability of their approaches. Second, the \emph{shape} of the profile determines industry structure: a shallow profile (large $\lambda$, most accessible bits just above the public budget) induces a low-budget, high-entry, fast-dissipation equilibrium with many small contestants and thin moats; a deep profile invites few contestants with large budgets and durable rents. The epiplexity profile is thus a fundamental of the asset class, predicting the concentration of the informed sector per instrument class.

\subsection{The congestion game: crowding as a negative externality}
\label{subsec:congestion}

Fix the budget frontier and consider the finer choice of \emph{which} bits to hold. Model the accessible structure as a finite set of signals $s\in S$ with standalone per-period gains $g_s>0$. Participant $i$ selects one signal, or more generally a portfolio of signals. If $n_s$ participants hold signal $s$, each earns $g_s\phi(n_s)$ from it, where $\phi$ is decreasing and $\phi(1)=1$. Crowding degrades per-holder value through shared impact, correlated entry and exit, and accelerated revelation.

\begin{theorem}[Finite congestion equilibrium]
\label{thm:congestion-full}
In the finite signal-selection game, a pure-strategy Nash equilibrium exists. A multiplicity vector $n=(n_s)_{s\in S}$ with $\sum_s n_s=N$ is an equilibrium if and only if for every occupied signal $s$ and every signal $r$,
\[
g_s\phi(n_s)\ge g_r\phi(n_r+1).
\]
The game admits the exact potential
\[
\Phi(n)=\sum_{s\in S}\sum_{k=1}^{n_s} g_s\phi(k),
\]
so best-response dynamics converge to a pure equilibrium. This is a congestion game in the sense of \textcite{rosenthal1973}, and $\Phi$ is an exact potential in the sense of \textcite{monderer1996}. The social optimum solves
\[
\max_{\sum_s n_s=N}\; W(n)=\sum_{s\in S} n_s g_s\phi(n_s),
\]
and its discrete marginal condition is
\[
g_s\bigl[(n_s+1)\phi(n_s+1)-n_s\phi(n_s)\bigr]
\le
\mu
\le
 g_s\bigl[n_s\phi(n_s)-(n_s-1)\phi(n_s-1)\bigr]
\]
for occupied $s$. Since the private entry condition uses $g_s\phi(n_s+1)$ rather than the social marginal product, equilibrium overuses any signal for which
\[
g_s\phi(n_s+1)>g_s\bigl[(n_s+1)\phi(n_s+1)-n_s\phi(n_s)\bigr],
\]
which holds whenever $\phi$ is strictly decreasing. Thus crowded equilibrium generally over-allocates to privately attractive shallow signals relative to the social optimum.
\end{theorem}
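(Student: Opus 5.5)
The plan is to treat the signal-selection game as a symmetric singleton congestion game in the sense of \textcite{rosenthal1973}. Each player picks one resource $s$ and receives $g_s\phi(n_s)$, which depends only on the multiplicity of the chosen resource. All claims then follow from Rosenthal's potential argument plus elementary discrete first-order conditions.

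\emph{Exact potential.} I would first verify that $\Phi(n)=\sum_s\sum_{k=1}^{n_s}g_s\phi(k)$ is an exact potential. Suppose player $i$ moves from $s$ to $r\neq s$. Then $n_s$ drops by one and $n_r$ rises by one. The potential changes by $g_r\phi(n_r+1)-g_s\phi(n_s)$, where $n$ denotes the pre-move profile. This equals player $i$'s payoff change exactly, since $i$ previously earned $g_s\phi(n_s)$ and now earns $g_r\phi(n_r+1)$. The strategy space is finite, so $\Phi$ attains a maximum, and any maximizer is a pure Nash equilibrium. Every strictly improving unilateral move strictly increases $\Phi$, so best-response (indeed any improvement) paths are finite and end at an equilibrium. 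This is the Monderer--Shapley finite improvement property \parencite{monderer1996}.

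\emph{Characterization.} The equilibrium characterization is just the no-profitable-deviation condition written in multiplicities. A holder of occupied $s$ earns $g_s\phi(n_s)$. By deviating to $r\neq s$ it would earn $g_r\phi(n_r+1)$. So $n$ is an equilibrium iff $g_s\phi(n_s)\ge g_r\phi(n_r+1)$ for all occupied $s$ and all $r\neq s$. For $r=s$ the inequality holds trivially because $\phi$ is decreasing, so the condition may be stated for every $r$. Payoffs depend only on $n$, so the condition on multiplicity vectors is equivalent to the condition on strategy profiles.

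\emph{Social optimum.} $W(n)=\sum_s n_s g_s\phi(n_s)$ is maximized over the finite simplex $\{\sum_s n_s=N\}$. At an optimum, moving one participant from occupied $s$ to any $r$ must not raise $W$. Writing $\Delta^+_r=g_r[(n_r+1)\phi(n_r+1)-n_r\phi(n_r)]$ and $\Delta^-_s=g_s[n_s\phi(n_s)-(n_s-1)\phi(n_s-1)]$, this gives $\Delta^+_r\le\Delta^-_s$ for all occupied $s$ and all $r$. Setting $\mu=\min_{s\text{ occupied}}\Delta^-_s$ yields the displayed sandwich $\Delta^+_s\le\mu\le\Delta^-_s$ for occupied $s$. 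These are necessary conditions only; I would note that without concavity of $k\phi(k)$ they are not sufficient.

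\emph{Overuse.} The comparison follows from the identity
\[
(n+1)\phi(n+1)-n\phi(n)=\phi(n+1)-n\bigl[\phi(n)-\phi(n+1)\bigr].
\]
When $\phi$ is strictly decreasing and $n\ge1$, this social marginal product is strictly below the private entry value $\phi(n+1)$. Multiplying by $g_s>0$ gives the displayed strict inequality. The interpretive step, that equilibrium over-allocates relative to the optimum, I would argue as follows. The private entrant compares $g_r\phi(n_r+1)$, which ignores the externality $n_r g_r[\phi(n_r)-\phi(n_r+1)]$ imposed on incumbents, whereas the planner's condition includes it. High-$g_s$ signals, which are the privately attractive ones, have the largest absolute externality.

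\emph{Main obstacle.} I expect the hardest step to be making ``overuses'' precise as a statement comparing equilibrium and optimal allocations. The inequality itself is routine, but a strict comparison $n^{\mathrm{eq}}_s>n^{\mathrm{opt}}_s$ need not hold signal by signal in general. I would therefore state it as the wedge between private and social marginal values, which is exactly what the theorem's ``generally'' licenses. I would not attempt a full monotone comparative-statics argument.
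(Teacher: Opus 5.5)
Your route is the paper's own: Rosenthal's potential for existence and convergence, the no-profitable-deviation condition for the characterization, single-participant transfers for the welfare condition, and the private--social wedge for overuse. The potential computation, the characterization (including the $r=s$ case) and the overuse step are correct. Your identity gives the wedge with the right sign, private minus social $=g_sn_s[\phi(n_s)-\phi(n_s+1)]\ge 0$; the paper's proof writes it with a stray minus. You also rightly restrict strictness to occupied signals.

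The step that does not go through is the choice $\mu=\min_{s'\text{ occupied}}\Delta^-_{s'}$. Transfers between distinct signals give $\Delta^+_r\le\Delta^-_s$ only for $r\neq s$. When the minimum is attained at $s$ itself, $\Delta^+_s\le\mu$ also requires $\Delta^+_s\le\Delta^-_s$, i.e.\ $(n_s+1)\phi(n_s+1)+(n_s-1)\phi(n_s-1)\le 2n_s\phi(n_s)$. That is local concavity of $k\mapsto k\phi(k)$, which no feasible reallocation delivers, and it can fail at the true optimum.

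Take $N=3$, $g_1=10$, $g_2=5$, $\phi(1)=1$, $\phi(2)=0.6$, $\phi(3)=0.55$, so $\phi$ is strictly decreasing. Then $W(3,0)=16.5$, $W(2,1)=17$, $W(1,2)=16$ and $W(0,3)=8.25$, so $(2,1)$ is the unique optimum. Yet $\Delta^+_1=10(1.65-1.2)=4.5>2=10(1.2-1)=\Delta^-_1$, so no $\mu$ satisfies the sandwich for $s=1$.

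Your caveat therefore points the wrong way. Without concavity of $k\phi(k)$ the displayed condition is not even necessary. With discrete concavity, your choice of $\mu$ works, and the sandwich, extended by $\Delta^+_r\le\mu$ for unoccupied $r$, is also sufficient by the standard exchange argument for separable concave allocation.

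To repair the step, do one of two things:
\begin{itemize}
\item add discrete concavity of $k\phi(k)$ as a hypothesis; or
\item state the necessary condition in cross-signal form, $\Delta^+_r\le\Delta^-_s$ for occupied $s$ and $r\neq s$.
\end{itemize}
The paper's one-line proof (``the discrete first-order condition for moving one participant from one signal to another'') skips exactly this point. The theorem's welfare clause, as written, therefore silently assumes the same concavity.
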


\begin{proof}
The potential identity follows because a unilateral move into signal $s$ changes the mover's payoff by exactly the corresponding increment in $\Phi$. Finite potential games have pure equilibria and best-response paths increase $\Phi$ until they stop. The equilibrium characterization is the no-profitable-deviation condition. The welfare condition is the discrete first-order condition for moving one participant from one signal to another. The wedge between private and social marginal values is $-g_s n_s[\phi(n_s)-\phi(n_s+1)]$, positive under decreasing $\phi$.
\end{proof}

\begin{example}[Closed-form two-signal congestion]
\label{ex:two-signal}
Let a unit mass of infinitesimal participants choose between a high-value shallow signal $H$ and a lower-value deeper signal $L$, with gains $g_H>g_L>0$ and crowding function
\[
\phi(x)=\frac{1}{1+a x},\qquad a>0.
\]
Writing $x$ for the mass on $H$, the Wardrop equilibrium with both signals active solves
\[
\frac{g_H}{1+a x}=\frac{g_L}{1+a(1-x)},
\]
which gives
\[
x_H^{\mathrm{NE}}=\frac{g_H(1+a)-g_L}{a(g_H+g_L)}
\]
whenever this lies in $(0,1)$, with boundary equilibria otherwise. The welfare optimum maximizes
\[
W(x)=\frac{g_Hx}{1+a x}+\frac{g_L(1-x)}{1+a(1-x)}
\]
and satisfies
\[
\frac{g_H}{(1+a x)^2}=\frac{g_L}{(1+a(1-x))^2},
\]
which gives
\[
x_H^{\mathrm{SO}}=\frac{\sqrt{g_H}(1+a)-\sqrt{g_L}}{a(\sqrt{g_H}+\sqrt{g_L})}.
\]
For the interior range $1<g_H/g_L<1+a$, one has $x_H^{\mathrm{NE}}>x_H^{\mathrm{SO}}$: the high-value signal is over-crowded. The price of anarchy is
\[
\mathrm{PoA}=\frac{W(x_H^{\mathrm{SO}})}{W(x_H^{\mathrm{NE}})},
\]
which increases as congestion steepens and as the value skew pushes the equilibrium toward the shallow high-value signal.
\end{example}

Because mutual compressibility is measurable (\cref{prop:crowding} and its converse), the congestion prediction is quantitative: multiplicity should be highest precisely on signals that are simultaneously high-value and computationally shallow - recoverable by commodity methods from public data - and joint drawdowns should concentrate there. The August 2007 quant episode is the canonical instance; the model predicts recurrence whenever a computation class commoditizes faster than the population diversifies away from it.

\subsection{The revelation game: competitive revelation over shared bits}
\label{subsec:revelation-competition}

The congestion function $\phi(n)$ has a microfoundation in the Kyle tradition. In \textcite{kyle1985}, a monopolist informed trader optimally \emph{paces} the revelation of her private information, releasing it into prices linearly over the horizon: monopoly over a bit implies patient extraction. \textcite{holden1992} and \textcite{foster1996} show what competition does: when $k$ traders hold the same information, equilibrium trading is front-loaded: each trades aggressively before others reveal the shared bits, and the information enters prices at a rate increasing in $k$, with per-holder rents collapsing rapidly; with imperfectly correlated signals, the revelation speed is governed by the correlation.

\begin{claim}[Dissipation speed is mutual compressibility]
\label{claim:revelation-speed}
In the multi-informed revelation game over a bit bundle, the equilibrium leak rate of the bundle into prices is increasing in the holders' forecast-stream mutual compressibility $I^{\mathrm{fore}}_{jk}$, and per-holder extracted value is decreasing in it. In the limit of identical models the shared bits are revealed almost immediately and per-holder rents vanish; holders of nearly orthogonal bits each retain near-monopoly pacing over their private components. The statement follows from the equilibria of \textcite{holden1992,foster1996} under the reinterpretation of signal correlation as mutual compressibility; a self-contained derivation is left open (\cref{sec:limitations-additions}).
\end{claim}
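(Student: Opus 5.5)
The plan is to build a concrete multi-informed Kyle model in the style of \textcite{foster1996} and prove the claim inside it. The key move is to identify the correlation structure of the insiders' signals with the forecast-stream mutual compressibility $I^{\mathrm{fore}}_{jk}$ of \cref{prop:crowding}.

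\textbf{Step 1: the signal model.} Let the liquidation value be $v\sim N(0,\sigma_v^2)$. There are $k$ informed traders, and trader $j$ observes $s_j=c+e_j$. The common part is $c$ and the private parts are $e_j$, all Gaussian and mutually independent.

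\textbf{Step 2: the bit dictionary.} In the small-edge Gaussian normalization of \cref{cor:ic}, two quantities translate into bits.
\begin{itemize}
\item The per-period information rate of trader $j$'s forecast is $g_j=-\tfrac12\log_2(1-\Corr(s_j,v)^2)$.
\item The shared bits are the information in the common component, $b_{jk}=-\tfrac12\log_2(1-\Corr(s_j,s_k)^2)$.
\end{itemize}
Gaussian forecast streams quantized at precision $\eta$, coded with a fixed prequential code, give $I^{\mathrm{fore}}_{jk}/(T-h)=b_{jk}+o(1)$. Monotonicity in $I^{\mathrm{fore}}$ is therefore equivalent to monotonicity in $\rho_{jk}=\Corr(s_j,s_k)$, holding $g_j$ fixed. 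I would verify this quantization limit with a standard rate--distortion argument and otherwise treat it as a calibration lemma.

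\textbf{Step 3: equilibrium leak rate.} I would work with the continuous-time or $N$-auction linear equilibrium of \textcite{foster1996}. The leak rate is defined as $-\tfrac{d}{dt}\Var(v\mid\text{prices}_t)$, expressed in bits through $\tfrac12\log_2$ of the variance ratio. Two anchor cases follow from the literature.
\begin{itemize}
\item For $\rho=1$, this reduces to \textcite{holden1992}: trading intensity $\beta_t$ is front-loaded and residual variance collapses at a rate increasing in $k$.
\item For $\rho=0$, each trader faces a Kyle monopoly over their own component and reveals it linearly.
\end{itemize}
The intermediate step is to use the Foster--Viswanathan characterization. That characterization proves that the residual-variance path, and hence the cumulative leak at every $t$, is monotone in $\rho$. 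I would try a comparative-statics argument on their ODE for $(\lambda_t,\Sigma_t)$, since higher $\rho$ makes rivals' orders more informative about one's own private information. Per-holder value $\E[\int (v-p_t)x_{j,t}\,dt]$ then falls, because expected profit equals $\int \lambda_t^{-1}\cdot(\text{leak})$-type expressions that decrease when revelation is front-loaded. This link is established through the envelope identity.

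\textbf{Step 4: limits.} The limits follow by taking $\rho\to1$, which recovers Holden--Subrahmanyam's immediate revelation as trading frequency grows, and $\rho\to0$, which recovers monopoly pacing.

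\textbf{Main obstacle.} I expect the hard part to be global monotonicity in $\rho$. Foster--Viswanathan's equilibrium exhibits a "waiting game" in which, for negatively or weakly correlated signals, traders become less aggressive late. The monotone statement may therefore hold only on $\rho\in[0,1]$ or only for the cumulative leak, not the instantaneous one. I would first try to prove it for cumulative revelation and total per-holder profit, numerically checking the instantaneous rate. If necessary, I would restate the claim in cumulative form.
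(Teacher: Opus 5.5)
The paper gives no derivation of this claim. It cites \textcite{holden1992,foster1996} and explicitly leaves the self-contained argument open, so your plan stands or falls with Step~3. There the key comparative static is asserted rather than proved, and as you formulate it, it is false.

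\textcite{foster1996} do not prove that the residual-variance path is monotone in $\rho$. Their multi-auction equilibrium is a backward recursion whose dependence on $\rho$ is explored numerically. The correlation of the insiders' forecasts conditional on public information also drifts down during trading, which is what produces the waiting game. So there is no off-the-shelf ODE comparison to invoke.

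Worse, your normalization (each $g_j$ fixed) combined with your leak measure (bits about $v$) gives the wrong sign already at $N=1$. Write $x_j=\beta s_j$, $p=\lambda y$, $R^2=\operatorname{Corr}(s_j,v)^2$ and $\kappa=\Cov(v,s_j)/\Var(s_j)$. The symmetric one-auction equilibrium then has $\lambda\beta=\kappa/(2+(k-1)\rho)$. The variance of $v$ impounded in the price is $\Var(v)\,kR^2/(2+(k-1)\rho)$, while the informed bundle is $\Var(\E[v\mid s_1,\dots,s_k])=\Var(v)\,kR^2/(1+(k-1)\rho)$. Raising $\rho$ at fixed $R^2$ shrinks the bundle faster than it speeds revelation. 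Absolute leak therefore falls, and only the revealed fraction $(1+(k-1)\rho)/(2+(k-1)\rho)$ rises.

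The same problem appears in continuous time. If high $\rho$ front-loads revelation, as the claim asserts, the absolute paths must cross, because the terminal total is smaller whenever the bundle is fully impounded by the end. So ``monotone at every $t$'' cannot hold in absolute bits. You need to measure leak relative to the bundle, i.e.\ bits about $\E[v\mid s_1,\dots,s_k]$, or fix total informed precision as Foster and Viswanathan do.

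Step~1 must also specify the joint law of $(v,c,e_j)$. With $s_j=v+e_j$ and independent noise one has $\rho=R^2$, so $\rho$ cannot move at fixed $g_j$ without common noise or informative private pieces.

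The limiting anchors also need repair. At $\rho=0$ the equilibrium is not $k$ decoupled Kyle monopolies. The insiders share one $\lambda$; in the one-shot game each earns only $1/\sqrt{k}$ of a lone monopolist's rent. Order flow also makes their information conditionally dependent: conditioning on $y=\beta\sum_i s_i+u$ gives $\Cov(s_i,s_j\mid y)<0$, which drives the waiting game. ``Near-monopoly pacing'' therefore needs its own argument. At the other end, the continuous-time version of the model has no linear equilibrium with identical signals (Back, Cao and Willard, 2000). Recovering \textcite{holden1992} therefore requires controlling the $\rho\to1$ and $N\to\infty$ limits jointly, not separately.

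The dictionary has problems too. Your $b_{jk}$ depends only on $\rho^2$, so $I^{\mathrm{fore}}_{jk}$ is even in $\rho$. The $c+e_j$ structure forces $\rho\ge0$, and that restriction is essential rather than a fallback. With negatively correlated signals, revelation slows while $I^{\mathrm{fore}}_{jk}$ rises. In the dynamic game the forecast streams also update on the common order flow. So $I^{\mathrm{fore}}_{jk}/(T-h)\approx b_{jk}$ needs a repeated-episode reading, with forecasts taken at the signal stage and $\eta\to0$. The fact you need is convergence of quantized mutual information, not a rate--distortion argument.

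The profit half is salvageable. Let $\Sigma_t=\Var(v\mid y_{[0,t]})$. Market-maker zero profit and the Kalman--Bucy identity $\dot\Sigma_t=-\lambda_t^2\sigma_u^2$ give total informed profit $\sigma_u\int_0^T\sqrt{-\dot\Sigma_t}\,dt=\sigma_u\sqrt{\Sigma_0-\Sigma_T}\int_0^T\sqrt{f_t}\,dt$, where $f_t$ is the normalized leak profile. By Cauchy--Schwarz this is maximized at uniform $f_t$. Per-holder value then reduces to the size of the bundle plus a concentration property of $f_t$ in $\rho$, and that concentration property is again the unproved core.
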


This closes the loop: $\phi(n)$ is the reduced form of competitive revelation, and its steepness is governed by how compressible the crowd's models are into one another. It also yields the cleanest statement of why \emph{heterogeneity is the public good of the informed sector}: orthogonal bits are extracted patiently and fully; duplicated bits are burned in a sprint. The private incentive to differentiate exists but is too weak --- the congestion wedge --- so equilibrium is systematically less diverse, faster-burning, and more crash-prone than the optimum.

\subsection{The depth game: endogenous one-wayness}
\label{subsec:depthgame}

The decay law of \cref{prop:decay} takes the profile shape $\lambda$ as given. Strategically it is not: participants choose how hard their bits are to re-derive. After acquiring bits of gain $g$, a participant chooses depth $d \ge 0$ at increasing convex cost $\chi(d)$; depth lowers both the crowd's re-derivation rate ($\lambda(d)\gamma$ with $\lambda$ decreasing in $d$) and the market's inference from the participant's own trading (the leak rate of \cref{prop:capacity}). Instruments of depth include execution obfuscation --- splitting, randomization, venue fragmentation --- infrastructure and data secrecy, and building signals whose reconstruction requires expensive intermediate computation.

\begin{proposition}[Optimal depth]
\label{prop:depth}
Let the protected bit stream have initial gain $g$, discount rate $r>0$, market-budget attack rate $\lambda(d)\gamma$, and flow-inference leak rate $\bar\ell(d)$, with $\lambda'(d)<0$, $\bar\ell'(d)<0$, and convex defense cost $\chi(d)$. Define
\[
a(d)=\lambda(d)\gamma+\bar\ell(d),
\qquad
H(a)=\int_0^\infty e^{-(r+a)t}dt=\frac{1}{r+a},
\]
and
\[
V(g,d)=\kappa_{\mathrm W}\ln2\cdot g\,H(a(d))-\chi(d).
\]
Any interior optimum satisfies
\[
\chi'(d)=\kappa_{\mathrm W}\ln2\cdot g\,\frac{-a'(d)}{(r+a(d))^2}.
\]
Thus optimal depth increases with bit value $g$ and, under decreasing differences in $a(d)$, with aggregate budget growth $\gamma$. Discovery and defense become substitutes: as the value of the protected bit stock rises, the marginal value of slowing both decay and leak rises.
\end{proposition}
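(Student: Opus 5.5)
The plan is to treat the protected bit stream as a discounted exponentially decaying flow, obtain $V$ in closed form, and then read off the first-order condition and the comparative statics by monotone comparative statics on $d$. First I justify the value function. Under the decay law of \cref{prop:decay} and the leak accounting of \cref{prop:capacity}, the unrevealed gain at time $t$ is eroded by two independent hazards: re-derivation by the crowd at rate $\lambda(d)\gamma$ and inference from flow at rate $\bar\ell(d)$. The remaining flow is therefore $g e^{-a(d)t}$. By \cref{thm:lifetime} each bit converts to at most $\kappa_{\mathrm W}\ln 2$ nats. Discounting at $r$ and integrating gives the gross value $\kappa_{\mathrm W}\ln2\cdot g\,H(a(d))$ with $H(a)=1/(r+a)$. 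This is finite because $r>0$ and $a\ge 0$. Subtracting the sunk defense cost yields $V(g,d)$.

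Next I differentiate in $d$. Using $H'(a)=-(r+a)^{-2}$ and the chain rule,
\[
\partial_d V=\kappa_{\mathrm W}\ln2\cdot g\,\frac{-a'(d)}{(r+a(d))^2}-\chi'(d).
\]
Here $a'(d)=\lambda'(d)\gamma+\bar\ell'(d)<0$, so the benefit term is positive. At any interior maximizer $\partial_dV=0$, which is exactly the displayed condition. Nothing beyond differentiability of $\lambda,\bar\ell,\chi$ is needed for the necessity claim.

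For the comparative statics, the cross-partial in $g$ is
\[
\partial_g\partial_d V=\kappa_{\mathrm W}\ln2\,\frac{-a'(d)}{(r+a(d))^2}>0.
\]
So $V$ has strictly increasing differences in $(d,g)$, and Topkis's theorem (or the implicit function theorem at a regular interior optimum) gives that the optimal depth is nondecreasing in $g$. This delivers the substitutes statement: a higher protected bit value raises the marginal value of slowing both decay and leak.

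The step I expect to be the main obstacle is the monotonicity in $\gamma$. Writing $a=\lambda(d)\gamma+\bar\ell(d)$, I compute
\[
\partial_\gamma\!\left[\frac{-a'(d)}{(r+a)^2}\right]=\frac{-\lambda'(d)}{(r+a)^2}+\frac{2a'(d)\lambda(d)}{(r+a)^3}.
\]
The first term is positive, but the second is negative, since raising $\gamma$ also shortens the horizon over which protection pays. The sign is therefore not automatic, and this is exactly why the statement imposes ``decreasing differences in $a(d)$''. I would formalize that hypothesis as the requirement that the displayed cross-partial be nonnegative, i.e. $(r+a)|\lambda'|\ge 2\lambda|a'|$ along the relevant range. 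Equivalently, $-\partial_d H(a(d);\gamma)$ has increasing differences in $(d,\gamma)$. Under this hypothesis Topkis again gives the optimal depth nondecreasing in $\gamma$.

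If uniqueness is wanted, I would add that $-a'/(r+a)^2$ is nonincreasing in $d$; together with convex $\chi$ this makes $V$ concave in $d$, so the interior first-order condition also characterizes the optimum.
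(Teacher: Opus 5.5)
Your proposal is correct and follows the paper's route: differentiate $V$ in $d$, read off the first-order condition, and apply monotone comparative statics. Your explicit cross-partial in $\gamma$ usefully exposes the horizon-shortening term that the paper's proof absorbs into its proviso that higher $\gamma$ raise the marginal benefit of depth, and $(r+a)|\lambda'|\ge 2\lambda|a'|$ is the precise form of that proviso. One small slip: the ``equivalent'' restatement should say that $H(a(d;\gamma))$ has increasing differences in $(d,\gamma)$, i.e.\ $\partial_d H(a(d;\gamma))$ is nondecreasing in $\gamma$, rather than that $-\partial_d H$ has increasing differences.
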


\begin{proof}
Differentiate $V(g,d)$. Since $a'(d)=\lambda'(d)\gamma+\bar\ell'(d)<0$, the marginal benefit of depth is positive and proportional to $g$. The displayed first-order condition follows. Standard monotone comparative statics gives the stated increase in depth when higher $\gamma$ raises the marginal benefit of reducing $\lambda(d)$.
\end{proof}

The one-wayness of the flow--price map asserted in \cref{sec:ordering} is thereby an equilibrium object, not a fact of nature: the observed computational hardness of inverting market data aggregates defensive expenditure and should co-move with the value under protection. Welfare is ambiguous in the classic trade-secret manner --- depth slows price discovery but protects the incentive to produce information --- with the novelty that ``secrecy'' now has a measurable proxy: the epiplexity of the inversion task.

\subsection{The arms race: Red Queen equilibrium}
\label{subsec:redqueen}

The decay clock $\gamma$ aggregates individual investments made to gain \emph{relative} position. Let a continuum of participants choose budget growth rates $\gamma_i$ at flow cost $c(\gamma_i)$; captured value depends on relative budget, while decay of everyone's existing bits depends on the aggregate $\bar\gamma$.

\begin{proposition}[Red Queen]
\label{prop:redqueen}
Let participant $i$ choose compute-growth effort $\gamma_i$ at cost $c(\gamma_i)$. Let the aggregate clock be $\bar\gamma=N^{-1}\sum_i\gamma_i$, and suppose the payoff is
\[
U_i=R(\gamma_i-\bar\gamma)-c(\gamma_i)-\delta(\bar\gamma)S_i,
\]
where $R'(0)>0$, $R''<0$, and $S_i$ is the participant's stock of still-private bits, assumed common across participants, $S_i\equiv S$, in the symmetric case studied here; $\delta'(\bar\gamma)>0$ is the decay cost imposed by aggregate budget growth. In a symmetric interior Nash equilibrium,
\[
R'(0)\left(1-\frac1N\right)-\frac1N\delta'(\gamma^*)S=c'(\gamma^*).
\]
All participants choose the same $\gamma^*>0$ when the relative-position benefit is large enough; relative positions are stationary; and aggregate payoffs are below the cooperative benchmark by compute costs plus the decay externality imposed on existing signals. A cooperative planner sets purely relative-position effort to zero unless effort creates new social epiplexity.
\end{proposition}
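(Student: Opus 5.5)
The plan is to treat this as a standard symmetric aggregative game and verify it through first-order conditions, then a direct welfare comparison.

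\emph{Step 1 (individual first-order condition).} Fix rivals' choices and differentiate $U_i$ in $\gamma_i$. The key bookkeeping is that $\gamma_i$ enters $\bar\gamma=N^{-1}\sum_j\gamma_j$ with weight $1/N$, so
\[
\partial_{\gamma_i}\bar\gamma=\tfrac1N,\qquad
\partial_{\gamma_i}U_i=R'(\gamma_i-\bar\gamma)\Bigl(1-\tfrac1N\Bigr)-c'(\gamma_i)-\tfrac1N\,\delta'(\bar\gamma)S_i .
\]
Imposing symmetry $\gamma_i=\gamma^*$ gives $\gamma_i-\bar\gamma=0$ and $\bar\gamma=\gamma^*$. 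With $S_i\equiv S$, the displayed condition follows at once.

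\emph{Step 2 (sufficiency and positivity).} I will assume $c$ is convex and $\delta$ is convex, as is standard for the paper's cost functions. Then $R''<0$ makes $U_i$ concave in $\gamma_i$, so the first-order condition characterizes a best response and a symmetric equilibrium. For $\gamma^*>0$, evaluate the left-hand side minus $c'$ at $\gamma=0$. If $R'(0)(1-1/N)>\delta'(0)S/N+c'(0)$, which is what "relative-position benefit large enough" means, then the marginal payoff is positive at zero. It decreases in $\gamma$ because $c'$ increases and $\delta'$ increases. Assuming $c'\to\infty$, an interior root $\gamma^*>0$ exists.

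Stationarity of relative positions is immediate: all participants choose the same rate, so $\gamma_i-\bar\gamma\equiv0$, and relative budgets never change.

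\emph{Step 3 (welfare).} At the symmetric equilibrium, each participant's payoff is $R(0)-c(\gamma^*)-\delta(\gamma^*)S$. The cooperative planner maximizes $\sum_iU_i$. Under symmetric profiles the relative term is constant at $R(0)$, so the planner's objective is
\[
N\bigl[R(0)-c(\gamma)-\delta(\gamma)S\bigr],
\]
which is decreasing in $\gamma$ because $c'\ge0$ and $\delta'>0$. The optimum is therefore $\gamma=0$, unless an additional social term (new epiplexity created by effort) is added to the objective. The gap between the two outcomes is
\[
N\bigl[c(\gamma^*)-c(0)+\bigl(\delta(\gamma^*)-\delta(0)\bigr)S\bigr]>0,
\]
exactly compute costs plus the decay externality.

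Asymmetric deviations need not be considered by the planner: since $R$ is concave, $\sum_i R(\gamma_i-\bar\gamma)\le NR(0)$ by Jensen (the arguments average to zero). This argument is the main obstacle to state cleanly. It also needs $\sum_i c(\gamma_i)\ge Nc(\bar\gamma)$, which holds by convexity of $c$, so symmetric profiles dominate.
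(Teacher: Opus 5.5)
Your proposal is correct and follows the paper's route: differentiate $U_i$ using that $\gamma_i$ enters $\bar\gamma$ with weight $1/N$, evaluate at the symmetric profile where $\gamma_i-\bar\gamma=0$, and compare with the planner's zero-effort benchmark. It is more complete than the paper's sketch in three places: you add convexity of $c$ and $\delta$ together with $c'\to\infty$ to obtain sufficiency of the first-order condition and existence of $\gamma^*>0$, and you use Jensen's inequality on the concave $R$ to rule out asymmetric planner profiles (for that last step, $c$ nondecreasing on $\gamma_i\ge 0$ would suffice in place of convexity).
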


\begin{proof}
Differentiating $R(\gamma_i-\bar\gamma)$ with respect to $\gamma_i$ gives $R'(0)(1-1/N)$ at a symmetric profile because $\bar\gamma$ includes $i$'s effort. Differentiating the common decay term gives the own-internalized share $(1/N)\delta'(\gamma^*)S$. Equating marginal benefit to marginal cost gives the displayed first-order condition. In a symmetric profile all relative positions equal zero, so the industry pays the cost of maintaining relative position while also accelerating the decay of existing signals. A planner cancels symmetric relative effort unless it has a separate social discovery term.
\end{proof}

The comparative static is the industry's recent history: $\gamma^*$ --- hence both expenditure and the decay clock --- increases with the contestability of the profile's shallow band, i.e.\ with commoditization of the dominant computation class. On this reading the machine-learning era did not create more aggregate alpha; it raised $\gamma^*$, with the wedge visible as the growth of compute, data, and cloud vendor revenues against roughly stationary aggregate excess returns of the informed sector.

\subsection{Mechanism design: shaping the harvestable profile}
\label{subsec:mechanism}

Exchanges, data vendors, and regulators move first: their choices shape the profile over which the games above are played. Disclosure regimes shift bits from private bands into the public budget $\tau_0$, compressing informational rents and the depth incentive alike --- with the design subtlety that position disclosure reveals model outputs and accelerates public revelation of shared bits, while aggregate-level disclosure reveals congestion itself, which \cref{thm:congestion-full} suggests is welfare-improving because it prices the externality. Speed bumps, tick sizes, and batch auctions truncate the profile, deleting the shallowest and most congested band --- precisely the band with the worst efficiency loss --- which gives the frequent-batch-auction proposal of \textcite{budish2015} an information-theoretic rationale. Public provision of synthetic and simulated data injects epiplexity into the public budget: positive-sum by construction, since it adds structure without adding rivalry, and it flattens the equilibrium of the budget game by differentially helping low-budget participants.

\subsection{The agentic corollary}
\label{subsec:agenticgame}

A population of AI trading agents built on commoditized foundation models and shared toolchains holds, by construction, highly mutually compressible bits: shared pretraining corpora, architectures, prompts, skill libraries, and fine-tuning data. Every mechanism above then fires simultaneously in the wrong direction: maximal $I^{\mathrm{fore}}_{jk}$ means shared bits are revealed almost immediately (\cref{claim:revelation-speed}); the congestion externality is maximal on the shallow band the common tooling can reach (\cref{thm:congestion-full}); the computation class is commoditized by definition, so decay is fast; and agent capability growth raises $\gamma^*$ for everyone including the agents (\cref{prop:redqueen}).

\begin{proposition}[Heterogeneity as a design requirement]
\label{prop:heterogeneity}
Let agent $i$ have structural bit gain $g_i$ and let $s_{ij}$ be the fraction of its forecast-stream bits mutually compressible with agent $j$. Suppose duplicated bits have crowding discount $\phi(1+m)$ where $m$ is their multiplicity, while private bits retain discount $\phi(1)$. Holding $g_i$ fixed, expected portfolio-level extracted value is decreasing in every $s_{ij}$. Against an external commodity population with overlap $s_{iC}$, the marginal value of deliberate heterogeneity is increasing in $s_{iC}$ and in the steepness of $-\phi'$.
\end{proposition}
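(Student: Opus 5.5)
The plan is to make the value functional explicit and then differentiate it. Agent $i$'s per-period gain $g_i$ splits into a private part, $(1-\sum_j s_{ij})g_i$, and shared parts $s_{ij}g_i$, one for each $j$. I first assume the overlaps are disjoint: either one partner per bit bundle, or fractions small enough that $\sum_j s_{ij}\le 1$. Under that assumption, define
\[
V_i(s)=\kappa_{\mathrm W}\ln2\Bigl[\bigl(1-\textstyle\sum_j s_{ij}\bigr)g_i\phi(1)+\sum_j s_{ij}g_i\,\phi(1+m_{ij})\Bigr],
\]
where $m_{ij}\ge1$ is the number of other holders of the bundle shared with $j$. Portfolio-level value is $V=\sum_i V_i$. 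Since $\phi$ is decreasing, as in \cref{thm:congestion-full}, we have $\phi(1+m)<\phi(1)$ for $m\ge1$. This reduced form is justified by \cref{claim:revelation-speed}: duplicated bits are revealed faster, and $\phi$ is the reduced form of that revelation. Only \cref{thm:congestion-full} and the stated crowding discount are actually needed.

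\textbf{Monotonicity in every $s_{ij}$.} With $g_i$ held fixed, the first step is to compute
\[
\partial V_i/\partial s_{ij}=\kappa_{\mathrm W}\ln2\,g_i\bigl[\phi(1+m_{ij})-\phi(1)\bigr]<0.
\]
Because overlap is symmetric ($s_{ij}>0$ forces the corresponding bits of $j$ to be shared), $s_{ij}$ also enters $V_j$ with a nonpositive sign. Raising overlap can also raise multiplicities elsewhere, which lowers $\phi(1+m)$ further; that effect only strengthens the inequality. Hence $V$ is decreasing in each $s_{ij}$. To handle overlapping shared sets, I would take $m$ to be the multiplicity of each elementary bit cell and write $V_i$ as an integral of $\phi(1+m(b))$ over the bits $b$ of agent $i$. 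Increasing any pairwise overlap weakly increases $m(b)$ pointwise, so monotonicity of $\phi$ gives the result directly.

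\textbf{External commodity population.} For the second claim, let $m_C\ge1$ be the multiplicity of the commodity population, so commodity-shared bits earn $\phi(1+m_C)$. The marginal value of deliberate heterogeneity means converting a unit of overlapping bits into private bits. Its value per unit of converted gain is
\[
\kappa_{\mathrm W}\ln2\,g_i\bigl[\phi(1)-\phi(1+m_C)\bigr]=\kappa_{\mathrm W}\ln2\,g_i\int_1^{1+m_C}(-\phi'(x))\,dx.
\]
This is increasing in the steepness of $-\phi'$ (pointwise dominance) and in $m_C$. To tie it to $s_{iC}$, the plan is to let effective multiplicity increase with overlap, $m_C=m(s_{iC})$ nondecreasing, since higher overlap with a commodity population means more commodity agents hold those bits. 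Alternatively, if heterogeneity cost $\chi$ is convex in the fraction diverted, the optimum amount diverted and its total value scale with the available shared stock $s_{iC}g_i$, and that stock increases in $s_{iC}$.

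\textbf{Main obstacle.} The hard step is this last one. With a constant multiplicity the per-unit marginal value is independent of $s_{iC}$, so the claim needs either the multiplicity link $m_C=m(s_{iC})$ or a concave-in-divergence return with a convex cost. I would state the $m(s_{iC})$ assumption explicitly, in the spirit of the small-edge factorization of \cref{prop:crowding}, and verify the comparative static by a decreasing-differences argument, as in \cref{prop:depth}.
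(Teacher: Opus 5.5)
Your proposal is correct and is essentially the paper's argument: split each agent's bits into private and shared components, observe that raising $s_{ij}$ moves bits from multiplicity one to a higher multiplicity where the decreasing $\phi$ lowers their value, and note that the per-bit loss $\phi(1)-\phi(1+m_C)=\int_1^{1+m_C}(-\phi'(x))\,dx$ is larger when the commodity population holds the bit and when $\phi$ is steeper. The obstacle you flag is genuine, and the paper glosses over it rather than resolving it: it obtains the $s_{iC}$ dependence only by summing that per-bit loss over the commodity-shared bits, i.e., by reading the value of heterogeneity as the total avoidable loss $s_{iC}g_i[\phi(1)-\phi(1+m_C)]$, which is your stock-based reading without the convex cost.
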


\begin{proof}
Decompose each agent's bits into private and shared components. A marginal increase in overlap moves a bit from multiplicity one to multiplicity at least two. Since $\phi$ is decreasing, the value of that bit weakly falls. The loss is larger when the external commodity population already holds the same bit and when the crowding discount is steeper. Summing over bit components gives the result.
\end{proof}

The strategic implication is direct: before deploying multiple model-based strategies, estimate mutual compressibility not only with the existing book but with the \emph{commodity class} - for instance, by measuring how much of the signal is recoverable by an off-the-shelf model from public data. A strategy whose bits are commodity bits has, in equilibrium, no durable private bits.

\section{A Closed-Form Toy Finance Model}

A theory paper should still make the objects numerically interpretable. The following stylized model gives closed-form entropy, MDL gain, and Sharpe ceiling calculations without claiming empirical validation.

Let $Z_t\in\{0,1\}$ be a Markov regime with stationary distribution $(1/2,1/2)$ and transition probability $q<1/2$. Returns follow
\[
r_{t+1}=\mu_{Z_t}+\sigma_{Z_t}\eps_{t+1},\qquad \eps_{t+1}\sim N(0,1),
\]
with $\mu_0<0<\mu_1$ and $\sigma_0>\sigma_1$. Let $X_t^{(1)}=r_t$ and $X_t^{(2)}=(r_t,c_t)$, where $c_t\in\{0,1\}$ is a noisy credit-spread proxy satisfying
\[
\Prob(c_t=Z_t)=p>1/2.
\]
Let the target be a binary tail event $Y_{t+1}=\mathbf 1\{r_{t+1}< -d\}$ or, in the simplest closed form, the regime label $Y_{t+1}=Z_t$.

For the label target, the one-period marginal entropy is
\[
H(Y)=1\quad\text{bit},
\]
while the proxy-conditioned entropy is
\[
H(Y\mid c)=h_2(p),
\]
where $h_2$ is the binary entropy function. If the additional description length of using the proxy-regime model rather than the null is $C$ bits, then the finite-sample net gain is
\[
\Gain_B(D_T^{R_2};M_0)=T\bigl[1-h_2(p)\bigr]-C.
\]
The representation has positive target-specific epiplexity exactly when
\[
T>\frac{C}{1-h_2(p)}.
\]
The implied per-period structural bit rate is
\[
g=1-h_2(p)-\frac{C}{T},
\]
and the Sharpe ceiling is
\[
\SR_{\mathrm{ann}}\le \sqrt{2\kappa_{\mathrm{SR}} n_{\mathrm{eff}}\left(1-h_2(p)-\frac{C}{T}\right)_+\ln2}.
\]
For example, $p=0.60$ gives $1-h_2(p)\approx 0.029$ bits per observation before model cost; this is already enough, at $\kappa_{\mathrm{SR}}=1$, to support an annualized daily Sharpe ceiling around $\sqrt{\,2\cdot 252\cdot 0.029\,\ln 2\,}\approx 3.2$ if the full bit were monetizable. At $\kappa_{\mathrm{SR}}=0.25$, the ceiling falls to about $1.6$. Thus tiny fractions of a bit are economically meaningful, but only after cost, impact, and spanning losses.

\begin{remark}[Label bits versus return bits]
The ceiling above is computed from bits about the regime label. Bits about $Z_t$ convert into bits about the return target only through the channel $(\mu_Z,\sigma_Z)$: by the data-processing inequality, $g_{\mathrm{return}}\le g_{\mathrm{label}}$, with equality only if the regime fully determines the traded payoff's conditional law in the direction traded. The numerical ceiling is therefore an upper bound on the upper bound.
\end{remark}

\begin{proposition}[Proxy-regime epiplexity]
Suppose $c_t$ satisfies $I(c_t;Z_t)>0$ and $Y_{t+h}$ depends on $Z_t$ after conditioning on $r_t$. Then there exists a sample size threshold $T^\star$ such that $X_t^{(2)}=(r_t,c_t)$ has positive net MDL gain over $X_t^{(1)}=r_t$ for predicting $Y_{t+h}$, provided the model class contains a bounded state-space approximation. In the binary label case above, $T^\star=C/[1-h_2(p)]$.
\end{proposition}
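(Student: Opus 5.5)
The argument reduces the statement to the regime value criterion of the earlier proposition, comparing a no-proxy model $M_0$ on $X^{(1)}=r_t$ with a proxy-state model $M_Z$ on $X^{(2)}=(r_t,c_t)$. Both are assumed to lie in $\M_B$; the bounded state-space approximation hypothesis supplies $M_Z$. By that criterion, $M_Z$ has positive net MDL gain if and only if
\[
\sum_{t=1}^{T-h}\Delta\ell_t > L_R(M_Z)-L_R(M_0)=:C .
\]
The net gain of the whole representation, measured through $\FMDL_B$ (an infimum), is at least that of this single comparison. So it suffices to show that the cumulative log-loss reduction grows linearly in $T$ while $C$ stays fixed, or grows only like $\OO(\log T)$ if parameters are coded to precision $1/\sqrt T$.

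The steps are as follows.

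\emph{Step 1: a strictly positive per-period rate.} Define
\[
\delta = I(Y_{t+h};c_t\mid r_t) = H(Y_{t+h}\mid r_t)-H(Y_{t+h}\mid r_t,c_t).
\]
We need $\delta>0$. This follows from $I(c_t;Z_t)>0$ together with the dependence of $Y_{t+h}$ on $Z_t$ given $r_t$. The point is that $c_t$ is a noisy but informative channel from $Z_t$ that is not already revealed by $r_t$, because $\Prob(c_t=Z_t)=p>1/2$ independently of the return noise. Hence the Markov chain $c_t\to Z_t\to Y_{t+h}$ (given $r_t$) has a non-degenerate composition, and conditional mutual information cannot vanish.

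\emph{Step 2: expected reduction.} Take $M_Z$ to be the true conditional law $p(Y\mid r,c)$, or its bounded state-space approximation within excess KL $\epsilon<\delta/2$. Take $M_0$ to be the best $r$-only code. Then the expected per-period reduction $\E[\Delta\ell_t]$ is at least $\delta-\epsilon\ge\delta/2$, by the standard identity that expected log-loss equals conditional entropy plus KL.

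\emph{Step 3: concentration.} The regime chain is ergodic, since $q\in(0,1/2)$. An ergodic theorem, or a Markov-chain Hoeffding bound if the losses are bounded (truncating the Gaussian-return contributions if needed), gives
\[
\tfrac1T\sum_t\Delta\ell_t\to\E[\Delta\ell_t]\quad\text{a.s.}
\]
The threshold is then any
\[
T^\star>\frac{C}{\delta/2},
\]
enlarged to absorb the fluctuation term of order $\sqrt{T\log(1/\delta')}$ with high probability.

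\emph{Step 4: the binary label case.} Here $Y_{t+1}=Z_t$ is uniform and, conditionally on $Z_t$, independent of nothing else used by the proxy model. So $H(Y)=1$ and $H(Y\mid c)=h_2(p)$, which gives $\delta=1-h_2(p)$. The expected gain is exactly $T[1-h_2(p)]-C$, matching the displayed formula for $\Gain_B(D_T^{R_2};M_0)$. Setting this to zero gives $T^\star=C/[1-h_2(p)]$, interpreted in expectation (the stated closed form ignores the fluctuation term).

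\textbf{Main obstacle: Step 1 in the general case.} The hypothesis ``$Y$ depends on $Z$ after conditioning on $r_t$'' plus $I(c;Z)>0$ does not by itself logically force $I(Y;c\mid r)>0$. One can construct cancellations where the information $c$ carries about $Z$ is exactly the part already in $r$, or is orthogonal to the part relevant for $Y$. I would resolve this by using the model structure: $c_t$ is a binary symmetric channel of $Z_t$ with noise independent of $(r_t,\eps)$. Then $p(Z\mid r,c)$ strictly differs from $p(Z\mid r)$ whenever $p(Z\mid r)\notin\{0,1\}$, which holds a.s. because Gaussian densities have full support. Pushing this through the non-degenerate dependence $p(Y\mid Z,r)$ gives $\delta>0$. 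Failing that, I would state this positivity as the operative assumption, as the proposition implicitly does.
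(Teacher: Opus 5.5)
Your route is the paper's. Its proof says only four things: $c_t$ lowers the Bayes conditional entropy of the target by a positive amount; a bounded state-space model approximates this reduction; the cumulative log-loss gain is linear in $T$ while the extra description length is fixed or sublinear; and the binary case is read off the toy display. Your Steps 1--3 are a more careful version of this. The obstacle you flag in Step 1 is real, since the paper simply asserts positivity. Your repair is correct: binary $Z_t$, proxy noise independent of returns so that $c_t\perp(r_t,Y_{t+h})\mid Z_t$, and $\Prob(Z_t=1\mid r_t)\in(0,1)$ by full Gaussian support. You should, however, state that independence explicitly as an assumption, because the toy model only specifies $\Prob(c_t=Z_t)=p$.

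The step that fails is Step 4. You defined $\delta=I(Y_{t+h};c_t\mid r_t)$, but then computed $H(Y)-H(Y\mid c)=1-h_2(p)$, which is the unconditional $I(Y;c)$. In the toy model, $r_t=\mu_{Z_{t-1}}+\sigma_{Z_{t-1}}\eps_t$ is informative about $Z_{t-1}$, and hence about $Y_{t+1}=Z_t$ because $q<1/2$; so $H(Y\mid r)<1$. Under exactly the independence you used in Step 1, the chain rule gives $\delta=I(Z_t;c_t\mid r_t)=I(Z_t;c_t)-I(r_t;c_t)=1-h_2(p)-I(r_t;c_t)$. Moreover $I(r_t;c_t)>0$, since $r_t$ and $c_t$ are both informative about the binary $Z_t$. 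So $\delta<1-h_2(p)$ strictly. For the comparison the statement actually names ($X^{(2)}$ over $X^{(1)}=r_t$), the per-period rate in the threshold is therefore $\delta$, not $1-h_2(p)$, and the code cost is the extra length over the $r$-only model. The closed form $C/[1-h_2(p)]$ is instead the threshold of the proxy model against a null $M_0$ that uses neither $r_t$ nor $c_t$. That is exactly what the paper's display $\Gain_B(D_T^{R_2};M_0)=T[1-h_2(p)]-C$ computes. The paper's statement makes the same conflation, so you should either state the binary clause relative to that $M_0$ or add the hypothesis $I(r_t;Z_t)=0$.

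A related point: your claim that the gain measured through $\FMDL_B$ is at least that of the single comparison controls only the $X^{(2)}$ side. $\FMDL_B(Y\mid X^{(1)})$ is an in-sample infimum over $r$-only codes. Comparing against it needs a uniform bound $\FMDL_B(Y\mid X^{(1)})\ge (T-h)H(Y\mid r)-o(T)$ over that class, not the expected loss of one fixed $M_0$. The paper's proof omits this as well.
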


\begin{proof}
Since $c_t$ carries information about $Z_t$ and $Y_{t+h}$ depends on $Z_t$ conditional on $r_t$, adding $c_t$ reduces the Bayes conditional entropy of the target by a positive amount. A bounded state-space model can approximate this reduction. The cumulative log-loss improvement grows linearly in $T$, while the additional description length is fixed or sublinear for fixed-dimensional parameters. The closed-form binary case gives the threshold directly.
\end{proof}

\section{Limitations and Failure Modes}

\subsection{Epiplexity is not alpha}

High raw financial epiplexity does not guarantee trading profit. A dataset can teach rich structure that is already priced, costly to trade, or useful only for risk control. Therefore epiplexity must be linked to task-specific outcomes. The alpha-relevant object developed in \cref{sec:alpha-bounds} is the net, target-specific compression gain $\Gain_B^{\mathrm{sw}}$ or $\mathcal A_B^{\mathrm{sw}}(Y;Z\mid X)$ when shut-off is allowed, not the raw model-bit count alone.

\subsection{Representation dependence can be abused}

Because epiplexity depends on representation, researchers may search representations until one appears structurally rich. This is a form of data mining. The remedy is pre-registration, nested validation, shuffled controls, and reporting of all attempted representation families.

\subsection{Nonstationarity}

A high-epiplexity representation in one regime may lose value in another. This is not a defect of the definition; it is a property of markets. Financial epiplexity should be reported as a function of time, regime, and horizon.

\subsection{Knightian and law-instability limits}

The definition is local to a probability model, filtration, and window. Under deep regime change, the relevant law may not be fixed enough for a single global epiplexity number to be meaningful. In such cases financial epiplexity should be reported as a rolling, regime-conditioned quantity, and robust or ambiguity-aware benchmarks should be used. This is consistent with the hard-to-predict view: some market difficulty is not merely low signal-to-noise, but instability of the law itself.

\subsection{Computational budget is part of the result}

Two studies with different budgets may obtain different epiplexity estimates. Therefore compute must be reported like a sample size or transaction-cost assumption.

\subsection{Model uncertainty and Knightian risk}

The MDL formulation selects the best code in a fixed model class. It does not eliminate model uncertainty outside that class. Ambiguity-averse or robust-control investors should replace the single predictive code by a worst-case or penalized family code; this changes the residual time-bounded entropy and usually lowers monetizable gain. Financial epiplexity is therefore compatible with robust control, but it is not a substitute for ambiguity modeling \parencite{hansenSargent2008}.

\subsection{Risk epiplexity versus return epiplexity}

A representation may have high epiplexity for volatility, drawdown, or expected shortfall while having little epiplexity for mean return. This distinction is essential. The object $\mathcal A_B(Y;Z\mid X)$, or its switching-closed counterpart, must name the target $Y$: return alpha, tail-risk control, liquidity forecasting, and regulatory stress testing are different targets with different structural bits.

\subsection{Additional limitations from the alpha bounds}
\label{sec:limitations-additions}

\begin{itemize}
\item \textbf{$\kappa$ is a market-structure functional, not a universal constant.} The lifetime monetization bound compresses incompleteness, impact, leverage, costs, execution, and survival into $\kappa_{\mathrm W}$; the Sharpe ceiling uses $\kappa_{\mathrm{SR}}$ and intentionally excludes leverage and survival. In reality, $\kappa$ varies by asset class, horizon, venue, capacity, and size. The bounds remain valid for any conservative upper estimate of $\kappa$, but tightness requires modeling it.
\item \textbf{Prequential slack and fluctuations.} \Cref{thm:lifetime} bounds the implemented trajectory's advantage over the batch two-part code by the description length of the trading-and-switching program itself, $\operatorname{Reg}^{\mathrm{sw}}_{B}(T)\le L_R(S_\pi)$, and its conditional-to-realized gap by a mean-zero fluctuation of order $\sqrt{hT}$. For neural model classes the description length $L_R(S_\pi)$ lacks standardized compression schemes in practice; until those are established, the bounds should be used as audits, rankings, and impossibility checks rather than exact capacity numbers.
\item \textbf{Reverse-engineering collapse.} Epiplexity need not be monotone in compute. A sufficiently expressive learner may reconstruct a simulator itself, collapse description length and in-distribution loss, and learn less transferable structure. Monitor for abrupt drops in the loss floor accompanied by worsening transfer metrics, and prefer heterogeneous simulator ensembles so no single generator is cheap to invert.
\item \textbf{Adversarial representation search.} Since $\Gain_B^{\mathrm{sw}}$ is representation-dependent, the Sharpe ceiling can be inflated by representation mining. The remedy is pre-registration, nested validation, and reporting all attempted representation families. A conservative audit tests a strategy against the ceiling of its pre-registered representation, not its best post-hoc representation.
\item \textbf{Stylized games.} The equilibria of \cref{sec:games} are proved in the games as defined; the revelation results in particular lean on the cited Kyle-tradition equilibria rather than a re-derivation with the compressibility parameter explicit. Turning \cref{claim:revelation-speed} from a claim into a theorem requires writing out the multi-informed model of \textcite{foster1996} with signal correlation reinterpreted as mutual compressibility, and \cref{prop:gsnash} requires an equilibrium concept in which computation is priced --- both are open.
\end{itemize}

\section{Conclusion}

Financial markets are neither pure noise nor stable machines. Consistent with the hard-to-predict thesis, they are causal economic systems whose structure is filtered through information sets, risk pricing, strategic use, capacity, and law instability. They are adaptive, partially observable, competitive systems in which structure is sparse, regime-dependent, costly to extract, and visible only relative to a representation and a computational budget. Classical entropy measures uncertainty, but it does not distinguish random variation from learnable market structure. Financial epiplexity fills this theoretical gap by measuring the structural information that a bounded learner can absorb from a represented financial dataset.

The central message is:
\[
\boxed{\text{Markets are high entropy, but not uniformly low epiplexity.}}
\]
Some financial data are random and useless. Some are easy and trivial. Some are difficult but structurally rich. The theory developed here gives a language for distinguishing these cases without identifying structure with either low entropy or high in-sample fit.

The alpha connection is bounded, not magical. Epiplexity is not profit. The monetizable object is net, target-specific MDL gain, and even that gain converts into wealth only through market-structure constants that account separately for growth/capacity constraints and Sharpe-degrading frictions. This yields a finite-information view of active management: alpha is bounded by predictive compression, Sharpe is bounded by structural bits per period, decay is governed by computational depth, crowding is shared model structure, and capacity is a bit-leak constraint.

The strategic layer completes the theory. Epiplexity bounds what a bounded learner can extract; the games of \cref{sec:games} determine what a bounded learner \emph{keeps}. Entry, crowding, revelation, obfuscation, and compute arms races are governed by two objects: the shape of the epiplexity profile and the mutual compressibility of participants' models. The informational game of markets, long described through metaphors of crowded trades, moats, competitive revelation, and compute arms races, becomes a mathematical game with explicit state variables.

\newpage
\appendix

\section{Mathematical Glossary}

\begin{description}[leftmargin=3.1cm]
\item[Entropy] Average surprise of a random variable.
\item[KL divergence] Distributional discrepancy, used in financial information theory as a regime-change diagnostic.
\item[Normalized mutual information] Bounded dependence diagnostic for temporal dependence and market-efficiency testing.
\item[Transfer entropy] Conditional mutual-information measure of directional information flow.
\item[Kolmogorov complexity] Shortest program length generating an object.
\item[MDL] Model selection principle minimizing model bits plus residual data bits.
\item[Epiplexity] Structural information available to a computationally bounded observer.
\item[Financial epiplexity] Bounded-compute learnable market structure in a represented financial dataset.
\item[Time-bounded entropy] Residual unpredictability after the bounded model has learned what it can.
\item[Net MDL gain] Reduction in total code length relative to a null financial model.
\item[Useful epiplexity] Epiplexity that transfers out of sample and survives costs.
\item[Return-targeted conditional epiplexity] Bounded-compute predictive compression of a financial target supplied by proprietary data beyond public information.
\item[Sharpe ceiling] The upper bound $\SR\le\sqrt{2\kappa_{\mathrm{SR}} g\ln2}$ implied by net structural bits per effective period.
\item[Computational depth] The budget gap or profile shape governing how quickly a signal becomes accessible to the market.
\item[Mutual compressibility] Shared forecast-stream bits between two strategies, used as a crowding proxy.
\item[Representation code] The code length $L(R)$ charged when the feature map or data representation is selected rather than fixed.
\item[Budget vector] The feasible set induced by training compute, memory, latency, search, and institutional cost constraints.
\item[Monetization constant] The growth conversion $\kappa_{\mathrm W}$ includes spanning, leverage, impact, cost, execution, and survival; the Sharpe conversion $\kappa_{\mathrm{SR}}$ excludes leverage and survival.
\item[Shallowness] The exponential thinning rate $\lambda=-\partial_\tau\log g(\tau)$ of the remaining gain above the market budget; the inverse notion is computational depth.
\item[Price of anarchy] The welfare loss from decentralized crowding relative to coordinated allocation of signal capacity.
\item[Red Queen equilibrium] A compute arms-race equilibrium in which relative positions are stationary while absolute expenditure and decay increase.
\end{description}

\newpage
\printbibliography

\end{document}